\newenvironment{proof}[1][Proof]{\textbf{#1.} }{\ \rule{0.5em}{0.5em}}
\def\p{\partial}
\newcommand{~}{\,\,\,\,=\,\,\,\,}
\newtheorem{thm}{Theorem}[section]
\newtheorem{cor}[thm]{Corollary}
\newtheorem{lem}[thm]{Lemma}
\newcommand{\gee}{\,\,\,\,\, \ge \,\,\,\,\,}
\newcommand{\gt}{\,\,\,\,\, > \,\,\,\,\,}
\newcommand{\al}{\alpha}
\newtheorem{rem}{Remark}[section]}
\newcommand{\mb}{\mathbf}
\newcommand{\e}{\varepsilon}
\newcommand{\be}{\begin{equation}}
\newcommand{\ee}{\end{equation}}
\newcommand{\bq}{\begin{eqnarray}}
\newcommand{\eq}{\end{eqnarray}}
\newcommand{\ex}{\mathbb{E}}
\newcommand{\half}{\frac{1}{2}}
\newcommand{\bs}{\bigskip}
\newcommand{\nind}{\noindent}
\newcommand{\nn}{\nonumber}
\newcommand{\pdd}[2]{\frac{\partial #1}{\partial #2}}
\newcommand{\lb}{\lbrace}
\newcommand{\rb}{\rbrace}
\newcommand{\sk}{\smallskip}
\newcommand{\mc}{\mathcal}
\newcommand{\pr}{\mathbb{P}}
\begin{document}



\title{\textbf{Small-time asymptotics for basket options \\- the bi-variate SABR model and \\ the hyperbolic heat kernel on $\mathbb{H}^3$}}

\author{
Martin Forde\thanks{Dept. Mathematics, King's College London, Strand, London, WC2R 2LS  ({\tt Martin.Forde@kcl.ac.uk})}\\
\and
Hongzhong Zhang\thanks{Dept. Statistics,
Columbia University,
New York, NY 10027 ({\tt hzhang@stat.columbia.edu})}  }
\maketitle

\begin{abstract}
   We compute a sharp small-time estimate for the price of a basket call under a bi-variate SABR model with both $\beta$ parameters equal to $1$ and three correlation parameters, which extends the work of Bayer,Friz\&Laurence\cite{BFL14} for the multivariate Black-Scholes flat vol model.  The result follows from the heat kernel on hyperbolic space for $n=3$ combined with the Bellaiche\cite{Bel81} heat kernel expansion and Laplace's method, and we give numerical results which corroborate our asymptotic formulae.  Similar to the Black-Scholes case, we find that there is a phase transition from one ``most-likely'' path to two most-likely paths beyond some critical $K^*$. \footnote{We thank Stefano de Marco and the HAJ for many useful discussions, and Peter Friz for alerting us to the updated book version of \cite{BFL14}.}
\end{abstract}

\section{Introduction}

Basket options (i.e. options on a linear combination of $n$ assets) are both interesting and difficult to price, in part because there is no closed-form expression for the price of a basket call option even under a simple $n$-dimensional flat-vol Black-Scholes model for $n \ge 2$, because a sum of independent log-normal random variables is no longer log-normal and does not admit a closed-form density.  Recently, Bayer,Friz\&Laurence\cite{BFL14} compute a small-time estimate for the density associated with pricing a basket call under a two-dimensional Black-Scholes model using Laplace's method; somewhat surprisingly, they find that there is ``phase-transition'' from one ``most-likely'' value for the two stock price values $(S^1_T,S^2_T)$ at maturity $T$, to \textit{two} most-likely values for this vector.

\sk
Gulisashvili\&Tankov\cite{GT15} characterize the implied volatility of a basket
call option at small and large strikes, in a multi-variate
Black-Scholes setting.  \cite{GT15} also compute the leading order term for implied volatility
when the asset prices follow the multidimensional Black-Scholes model evaluated at an independent time-change, and they also deal with a general model where the dependence between assets is described by a copula.  Armstrong et al.\cite{AFLZ14} compute a small-time expansion for implied volatility under a general uncorrelated local-stochastic volatility model using the Bellaiche\cite{Bel81} heat kernel expansion combined with Laplace's method; they also consider the case when the correlation $\rho \le 0$ and in this case the approach still works if the drift of the volatility takes a specific functional form and there is no local volatility component, which includes the SABR model for $\beta=1, \rho \le 0$.

\sk
   In this article, we compute the small-time behaviour of a basket call option under a \textit{bi-variate} uncorrelated SABR model with $\beta$ parameters equal to $1$ with three correlation parameters for two correlated assets, and (as for the Black-Scholes case) we find that the same phase transition effect occurs at the same critical strike, i.e. the ``rate function'' in the exponent of the saddlepoint approximation is qualitatively different for strikes values greater than some critical strike $K^*$, where there is a phase transition from one ``most-likely'' configuration to two most-likely paths.  The result follows from the known expression for the hyperbolic heat kernel on the upper half plane $\mathbb{H}^3$ combined with the Bellaiche small-time heat kernel expansion over compact domains and Laplace's method (in a similar spirit to \cite{AFLZ14}).  We then extend this result to a correlated bivariate SABR model with three correlation parameters.

\sk

\section{Background on the heat kernel and hyperbolic space}

\sk

Consider a diffusion process on $M=\mathbb{R}^n$ with infinitesimal generator $L$.  In local coordinates, $L$ takes the form
\bq
\label{eq:SK}
L ~\frac{1}{2}\sum_{1 \le i,j\le n}a^{ij}(x)\pdd{^2}{ x_i \partial x_j} \,+ \, \sum_{1 \le i \le n} b^i(x) \pdd{}{x_i}\,.\nn
\eq
\nind (see Theorem \ref{thm:Bellaiche} below for the conditions that we impose on $b,\sigma$ via conditions on $\mc{M}$ and $\mc{A}$).  Now furnish $M$ with a Riemmanian metric $g_{ij}=(a_{ij})^{-1}$ so that $M$ is a smooth Riemmanian manifold with a single chart given by the identity map.  We can write $L$ as $\frac{1}{2}\Delta+\mathcal{A}$, where $\Delta=\sum_{i,j}\frac{1}{\sqrt{|g|}}\partial_i (\sqrt{|g|}\,g^{ij}\partial_j)$ is the Laplace-Beltrami operator and $\mathcal{A}^i=b^i-\frac{1}{2}\sum_j \frac{1}{\sqrt{|g|}}\partial_j (\sqrt{|g|}\,g^{ij})$ is a smooth first-order differential operator and $|g|=|\det g_{ij}|$ (recall that $g^{ij}=(g_{ij})^{-1}$).

\bs

Given such an operator $L$, the heat kernel $p(\mb{x};\mb{y},t)$ of $L$ is the fundamental solution to the heat equation $\p_t u =(\mc{A}+\half \Delta) u$, which is also the transition density of the diffusion $X$ with respect to the Riemannian volume measure $\sqrt{|g|}$ (see \cite{Hsu} for more details); to obtain the transition density $\hat{p}(\mb{x};\mb{y},t)$ of $X$ with respect to Lebesgue measure $dx_1 ... dx_n$, we set
\bq
\hat{p}(\mb{x};\mb{y},t)&=&p(\mb{x};\mb{y},t)\sqrt{g(\mb{y})}\, \nn
\eq
where $\sqrt{g}$ is shorthand for $\sqrt{\mathrm{det}\,g}$.

\sk
\sk

Throughout, we let $\rho(\mb{x},\mb{y})$ denote the Riemannian distance between two points $\mb{x},\mb{y}\in M$.

\sk
\sk

\begin{thm} \label{thm:Bellaiche} (Theorem 4.1 in \cite{Bel81}).  Let $M$ be a $C^4$-Riemannian manifold and $\mathcal{A}$ a $C^4$-vector field. The heat kernel $p(\mb{x};\mb{y},t)$ of the operator $\frac{1}{2}\Delta + \mathcal{A}$ satisfies:
\bq \label{eq:MolchanovFormula}
 p(\mb{x};\mb{y},t) &=& (2\pi t)^{-\frac{n}{2}} u_0(\mb{x},\mb{y}) \,e^{-\half \rho(\mb{x},\mb{y})^2/t+A(\mb{x},\mb{y})}[1+o(1)] \quad \quad \quad \quad (t \to 0)
\eq
for some function $u_0(\mb{x},\mb{y})$ (see e.g. \cite{Hsu02} or \cite{AFLZ14} for details on how to compute $u_0(\mb{x},\mb{y})$), where $(\mb{x},\mb{y}) \in (M \times M) \setminus C(M)$ \footnote{$C(M)$ is the subset of points $(\mb{x},\mb{y})$ in $M \times M$ such that $x$ lies in the cut locus of $y$}, and
\bq
A(\mb{x},\mb{y}) &=&\int_0^1 \langle \mathcal{A},\dot{\gamma}(s) \rangle \, ds \nn
\eq
\nind for the unique distance-minimizing geodesic $\gamma:[0,1]\longrightarrow M$ joining $\mb{x}$ and $\mb{y}$. The estimate \eqref{eq:MolchanovFormula} is uniform on compact subsets of $(M \times M) \setminus C(M)$.
\end{thm}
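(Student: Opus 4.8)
The plan is to prove \eqref{eq:MolchanovFormula} by the classical geometric-optics (Minakshisundaram--Pleijel / parametrix) construction, modified to absorb the first-order term $\mathcal{A}$, and then to upgrade the resulting formal expansion to a genuine uniform estimate by Levi's parametrix method. \textbf{Step 1 (formal WKB expansion).} Fix $\mb{y}$ and regard $p(\mb{x};\mb{y},t)$ as a function of $(\mb{x},t)$, so that it solves the backward equation $\p_t p = (\tfrac12\Delta + \mathcal{A})p$ with $p \to \delta_{\mb{y}}$ as $t \downarrow 0$. Posit the ansatz
\be
p(\mb{x};\mb{y},t) \;=\; (2\pi t)^{-n/2}\, e^{-\rho(\mb{x},\mb{y})^2/(2t)}\,\sum_{k\ge 0} U_k(\mb{x},\mb{y})\, t^k ,
\ee
substitute it into the heat equation, and collect powers of $t$. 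Using that off the cut locus the distance function satisfies the eikonal identity $|\nabla_{\mb{x}}\rho| \equiv 1$, the most singular terms cancel and one is left with a hierarchy of \emph{transport equations}: a first-order linear ODE for $U_0$ along the unique minimizing geodesic $\gamma$ from $\mb{x}$ to $\mb{y}$, and then, recursively, inhomogeneous first-order ODEs for $U_1, U_2, \dots$ in which $\Delta U_{k-1}$ enters as a source. Solving the $U_0$-equation with the normalization forced by matching the Euclidean heat kernel as $\mb{x}\to\mb{y}$ gives $U_0(\mb{x},\mb{y}) = u_0(\mb{x},\mb{y})\, e^{A(\mb{x},\mb{y})}$, where $u_0$ is the Van Vleck--Morette factor (a suitable power of the Jacobian of $\exp_{\mb{x}}$, so that $u_0(\mb{x},\mb{x}) = 1$; see \cite{Hsu02}, \cite{AFLZ14}), and the drift contributes exactly $\exp\!\big(\int_0^1\langle\mathcal{A},\dot\gamma(s)\rangle\,ds\big) = e^{A(\mb{x},\mb{y})}$ upon integrating $\langle\mathcal{A},\nabla\rho\rangle$ along $\gamma$.

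\textbf{Step 2 (from formal series to rigorous estimate).} Fix a compact $Q \subset (M\times M)\setminus C(M)$. On a neighbourhood of $Q$ all of the objects above are well defined and smooth ($\rho^2$, $\exp_{\mb{x}}^{-1}$, the VM factor, the transport integrals), and $u_0$ is bounded away from $0$ and $\infty$. Truncate the series at order $N$, setting $q_N(\mb{x};\mb{y},t) := (2\pi t)^{-n/2} e^{-\rho^2/(2t)}\sum_{k=0}^N U_k(\mb{x},\mb{y})\,t^k$; by construction of the $U_k$, the residual $r_N := -(\p_t - \tfrac12\Delta - \mathcal{A})q_N$ is $O(t^{N - n/2})\,e^{-\rho^2/(2t)}$ times a smooth bounded factor on $Q$. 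One then writes the true kernel as the Duhamel/Neumann series $p = q_N + q_N * r_N + q_N * r_N * r_N + \cdots$, where $*$ is space-time convolution, and bounds the iterated convolutions by standard Gaussian estimates: kernels of the form $t^{-n/2} e^{-c\rho^2/t}$ are stable under $*$ (via the semigroup property of Gaussians together with the triangle inequality $\rho(\mb{x},\mb{z}) \le \rho(\mb{x},\mb{w}) + \rho(\mb{w},\mb{z})$), so each successive term gains a power of $t$. This shows the correction is $O\!\big(t^{N+1-n/2}\,e^{-\rho^2/(2t) + A}\big)$ \emph{uniformly on} $Q$; taking $N = 0$ (or $N = 1$, to be comfortable with the error bookkeeping) yields \eqref{eq:MolchanovFormula} with $u_0(\mb{x},\mb{y})$ as above, uniformly on $Q$, which is exactly the claim.

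\textbf{Alternative (probabilistic route).} One can instead follow Molchanov's method: realize $p$ as the transition density of the diffusion $X$ with generator $\tfrac12\Delta + \mathcal{A}$, use a Girsanov change of measure to strip the drift — this replaces $X$ by Brownian motion on $M$ at the cost of the multiplicative functional $\exp\!\big(\int_0^t\langle\mathcal{A},\circ\,dX_s\rangle - \tfrac12\int_0^t|\mathcal{A}|^2\,ds\big)$, and the short-time expansion of the Brownian heat kernel on $M$ is classical — and then apply Laplace's method on the pinned (Brownian bridge) measure. As $t\downarrow 0$ the bridge from $\mb{x}$ to $\mb{y}$ concentrates uniformly on the minimizing geodesic $\gamma$ (off the cut locus), so the Stratonovich integral converges to $\int_0^1\langle\mathcal{A},\dot\gamma(s)\rangle\,ds = A(\mb{x},\mb{y})$ while the quadratic-variation term is $O(t)$; this produces the factor $e^{A(\mb{x},\mb{y})}$ in the limit, and the uniformity on compacta is inherited from the uniform bridge concentration.

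\textbf{Main obstacle.} The delicate point is not the algebra of the transport equations but making Step 2 work \emph{under the low $C^4$ regularity hypothesis} and \emph{uniformly} on $Q$: one must verify that $g$ and $\mathcal{A}$ are differentiable enough for $q_0$ (resp. $q_1$) to be a legitimate competitor — twice differentiable in $\mb{x}$, once in $t$ — and that the Gaussian bounds on the Duhamel iterates hold with constants depending only on $Q$ and not on the individual base points, which dictates exactly how many terms $N$ one may keep against the available smoothness. Reconciling these two requirements is precisely the content of Bellaiche's Theorem 4.1.
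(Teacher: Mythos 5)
This statement is not proved in the paper at all: it is Theorem~4.1 of Bellaiche \cite{Bel81}, imported verbatim as a black box (the paper only uses it, together with the exact $\mathbb{H}^3$ kernel, to identify $u_0=\rho/\sinh\rho$). So there is no in-paper proof to compare against, and your proposal must stand on its own as a proof of the cited result. Your Step~1 is the correct and standard geometric-optics computation: with the ansatz $p=(2\pi t)^{-n/2}e^{-\rho^2/(2t)}\sum_k U_k t^k$ the order-$t^{-2}$ terms give the eikonal equation $|\nabla\rho|=1$, and the order-$t^{-1}$ transport equation for $U_0$ picks up the drift through $\langle\mathcal{A},\nabla(\rho^2/2)\rangle=\langle\mathcal{A},\dot\gamma\rangle$, whence $U_0=u_0\,e^{A}$ with $u_0$ the Van Vleck--Morette factor. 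That part is sound.

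The genuine gap is in Step~2. The parametrix $q_N$ is only defined and smooth off the cut locus, so before you can form the Duhamel series $p=q_N+q_N*r_N+\cdots$ you must multiply $q_N$ by a cutoff supported where $\rho^2$ is smooth; the residual $r_N$ then contains, besides the $O(t^{N-n/2})e^{-\rho^2/(2t)}$ term you account for, derivative-of-cutoff terms supported at a positive distance from the diagonal. The intermediate point $\mb{w}$ in $q_N*r_N$ ranges over \emph{all} of $M$, including the cut loci of $\mb{x}$ and $\mb{y}$, and to conclude that these contributions are $o$ of the main term \emph{uniformly on} $Q$ you need a strict gap $\rho(\mb{x},\mb{w})^2+\rho(\mb{w},\mb{y})^2>\rho(\mb{x},\mb{y})^2+\delta$ on the support of the cutoff terms, with $\delta>0$ uniform over $Q$ --- this is precisely where the hypothesis $(\mb{x},\mb{y})\notin C(M)$ and the restriction to compacta enter, and it is not established by the bare remark that Gaussians are ``stable under $*$'' (the inequality $\rho(\mb{x},\mb{w})^2/t_1+\rho(\mb{w},\mb{y})^2/(t-t_1)\ge\rho(\mb{x},\mb{y})^2/t$ does hold by Cauchy--Schwarz, but the prefactor and volume bookkeeping on a non-compact $M$, and the survival of the factor $e^{A}$ rather than just the exponential rate, both require argument). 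Finally, you yourself flag that the entire Step~2 bookkeeping under only $C^4$ regularity is ``precisely the content of Bellaiche's Theorem 4.1'' --- which is an admission that the decisive step is deferred rather than carried out. As it stands the proposal is a correct outline of the standard strategy (and your probabilistic alternative is indeed Molchanov's route), but not a proof.
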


We know that Theorem 1.1 holds in general.  However, when $M$ is the upper half space $\mathbb{H}^3=\lb (x,y,a) \in \mathbb{R}^3 : a>0 \rb$, and the metric $(g_{ij})$ is the Poincar\'{e} metric with line element $ds^2=\frac{1}{a^2}(dx^2+dy^2+da^2)$ and $\mc{A}=0$, from e.g. \cite{GM98} we also have the known exact formula:
\bq
\label{H3ExactSolN}
p(\mb{x};\mb{y},t)&=&  \frac{1}{(2 \pi t)^{\frac{3}{2}}}\,\frac{\rho}{\sinh \rho} \, e^{-\half t-\frac{\rho^2}{2t}}  \,,
\eq
where $\rho$ is shorthand for $\rho(\mb{x},\mb{y})$.  Equating \eqref{H3ExactSolN} with \eqref{eq:MolchanovFormula} for  $\mc{A}=0$, we see that we must have that
\bq
u_0(\mb{x},\mb{y}) &=&\frac{\rho}{\sinh \rho} \nn \,
\eq
for this metric.

\sk
\sk

\section{The bi-variate SABR model - zero correlation case}

\bs

We work on a model $(\Omega,\mc{F},\mathbb{P})$ throughout, with a filtration $\mc{F}_t$ supporting three independent Brownian motions which satisfies the usual
conditions.

\sk
\sk

We now consider the following bi-variate SABR model for two asset price processes $S^{(1)}_t,S^{(2)}_t$:
\bq
\label{eq:Model}
        \left\{
        \begin{array}{ll}
dX_t\,=\,\, -\half a_t^2 dt + a_t dW^1_t \,, \\
dY_t \,\,=\,\, -\half a_t^2  dt + a_t dW^2_t \,,\\
da_t \,\,\,= \,\, a_t dW^3_t \,
        \end{array}\
        \right.
\eq
where $X_t=\log S^{(1)}_t,Y_t=\log S^{(2)}_t$ and $W^1,W^2,W^3$ are three independent standard Brownian motions and $a_0>0$.  The law of $(X_t-X_0,Y_t-Y_0)$ is independent of $(X_0,Y_0)$, so without loss of generality we set $X_0=0,Y_0=0$.

\bs

\nind We first recall some facts about the geometry associated with this model:

\begin{itemize}
\item  The associated Riemmanian metric $(g_{ij})$ is the three-dimensional hyperbolic metric on $\mathbb{R}\times \mathbb{R} \times \mathbb{R}^+$ with line element $ds^2=\frac{1}{a^2}(dx^2+dy^2+da^2)$, and volume element $\sqrt{g}=\frac{1}{a^3}$.
\item From e.g. page 179 in \cite{HL08}, the geodesic distance between two points $x_0,y_0,a_0$ and $(x_0,y_0,a_0)$ is given by
\bq
\label{eq:dist}
\rho ~ \rho(x_0,y_0,a_0;x,y,a) &=& \cosh^{-1}[1+\frac{|(x-x_0)^2 +(y-y_0)^2 +(a-a_0)^2|}{2 a_0 a}]  \,.
\eq

\item The straight lines perpendicular to $a=0$ and the circles of $\mathbb{H}^3$ whose planes are perpendicular to the hyperplane $a=0$ and whose centres are in this hyperplane are the geodesics of $\mathbb{H}^3$ (see Proposition 3.1 on page 127 in doCarmo\cite{doC92}).

\item The Laplace-Beltrami operator for $\mathbb{H}^3$ is given by
\bq
\Delta &=&
 a^2 (\p_x^2 +\p_y^2+\p_a^2) \,-\,a \p_a \nn \,
\eq
(see e.g. Eq 3.2 in \cite{MY05}).
\item For this model, $\mc{A}$ is given by $\mathcal{A}^i=b^i-\frac{1}{2}\sum_j \frac{1}{\sqrt{g}}\partial_j (\sqrt{g}\,g^{ij})$ so $\mc{A}=(-\half a^2, -\half a^2,\half a)$.  Then we have
\bq
\label{eq:mathcalA}
A(x_0,y_0,a_0;x,y,a) &=&  \int_0^1 \langle \mathcal{A},\dot{\gamma}\rangle \,dt \nn \\
&=& \int_{\gamma} [\frac{1}{a^2}\mathcal{A}^1  \frac{dx}{dt}  \,+ \, \frac{1}{a^2}\mathcal{A}^2 \frac{dy}{dt}  + \frac{1}{a^2}\mathcal{A}^3 \frac{da}{dt}] dt \nn \\
&=& \int_{\gamma} [-\half \frac{dx}{dt}  \,-\, \half \frac{dy}{dt}  + \frac{1}{2 a} \frac{da}{dt}] dt \nn \\
&=& -\half (x-x_0) -\half (y-y_0) \,+\, \half \log \frac{a}{a_0}  \,.
\eq

\item 
Combining \eqref{eq:MolchanovFormula}, \eqref{H3ExactSolN} and \eqref{eq:mathcalA}, we see that the density of $(X_t,Y_t,a_t)$ has the following small-time behaviour over any compact set of $(x,y,a)$:
\bq \label{eq:MolchanovFormula2}
 \hat{p}(x_0,y_0,a_0;x,y,a,t) ~  \hat{p}(x,y,a,t) &=& \sqrt{g}\,\,e^{-\half (x-x_0+y-y_0) \,+\, \half \log \frac{a}{a_0}}\,\frac{1}{(2 \pi t)^{\frac{3}{2}}}\frac{\rho}{\sinh \rho} \, e^{-\frac{\rho^2}{2t}}[1+o(1)]   \nn\\
  &=& \frac{\sqrt{a}}{\sqrt{a_0}} \frac{1}{a^3}\,e^{-\half (x-x_0+y-y_0)}\,\frac{1}{(2 \pi t)^{\frac{3}{2}}}\frac{\rho}{\sinh \rho} \, e^{-\frac{\rho^2}{2t}}[1+o(1)]  \nn \\
  &=& \frac{1}{\sqrt{a_0}a^{\frac{5}{2}}} \,e^{-\half (x-x_0+y-y_0)}\,\frac{1}{(2 \pi t)^{\frac{3}{2}}}\frac{\rho}{\sinh \rho} \, e^{-\frac{\rho^2}{2t}}[1+o(1)] \quad \quad \quad \quad (t \to 0) \, .\nn \\
\eq

\item
Temporarily switching variables, we know that $\hat{p}(x,y,a;x_1,y_1,a_1,t)$ is a solution to the backward heat equation
\bq
\p_t \hat{p} &=&-\half a^2 ( \hat{p}_x+ \hat{p}_y) \,+\, \half a^2 ( \hat{p}_{xx} + \hat{p}_{yy}+ \hat{p}_{aa})   \nn \,
\eq
subject to $\hat{p}(x,y,a;x_1,y_1,a_1,t)=\delta(x-x_1,y-y_1,a-a_1)$.  If we now let
\bq
\hat{p}(x,y,a;x_1,y_1,a_1,t) &= & e^{\half (x-x_1+y-y_1) \,+\, \half \log \frac{a_1}{a}} q(x,y,a;x_1,y_1,a_1,t)\eq then the PDE transforms to
\bq
\label{eq:qPDE}
\p_t q &=&-\half a q_a\,+\, \half a^2 ( q_{xx} + q_{yy}+ q_{aa})\,+\,V(y) q   ~ \half \Delta  q \,+\, V(y) q   \,
\eq
\nind with $q(x,y,a;x_1,y_1,a_1,t)=\delta(x-x_1,y-y_1,a-a_1)$, where $V(y)=\frac{3}{8}-\frac{1}{4}a^2$ and $\Delta$ is the Laplace-Beltrami operator as before.

\sk
\sk

\item The law of $(X_t-x_0,Y_t-y_0)$ is independent of $x_0,y_0$, so without loss of generality we set $x_0=y_0=0$ from here on.  Then from \eqref{eq:qPDE} we see that
\bq
\label{eq:GlobalBound}
\hat{p}(x_0,y_0,a_0;x,y,a,t) &=&  e^{-\half(x+y)} \,\frac{1}{a^3} \frac{\sqrt{a}}{\sqrt{a_0}}\ e^{\frac{3}{8}t}
\,\frac{1}{dx dy da}\mathbb{E}^{\mathbb{P}^0}(e^{-\frac{1}{4}\int_0^t a_s^2 ds} \,1_{(X_t,Y_t,a_t)\in (dx,dy,da)})\nn \\
 &\le &  e^{-\half(x+y)} \,\frac{1}{a^3}  \frac{\sqrt{a}}{\sqrt{a_0}}\ e^{\frac{3}{8}t}
\,\frac{1}{dx dy da}\mathbb{E}^{\mathbb{P}^0}( \,1_{(X_t,Y_t,a_t)\in (dx,dy,da)})\nn \\
&= & e^{-\half(x+y)} \frac{1}{a^3}  \, \frac{\sqrt{a}}{\sqrt{a_0}}\ e^{\frac{3}{8}t}\, \hat{p}^0(x,y,a,t)\nn \\
&=& e^{-\half(x+y)} \, \frac{1}{\sqrt{a_0} a^{\frac{5}{2}}}\ e^{\frac{3}{8}t}\frac{1}{(2 \pi t)^{\frac{3}{2}}}\,\frac{\rho}{\sinh \rho} \, e^{-\half t-\frac{\rho^2}{2t}}\nn\\
&=&e^{-\half(x+y)} \, \frac{1}{\sqrt{a_0} a^{\frac{5}{2}}}\ e^{-\frac{1}{8}t}\frac{1}{(2 \pi t)^{\frac{3}{2}}}\,\frac{\rho}{\sinh \rho} \, e^{-\frac{\rho^2}{2t}}
\eq
where $\mathbb{P}^0$ and $\hat{p}^0$ denote the measure and transition density associated with the Laplace-Beltrami operator i.e. without the additional $\mc{A}$ term.
\eqref{eq:GlobalBound} provides a \textit{global} upper bound on the transition density for $(X_t,Y_t)$ we can use to deal with the tail integrals outside the compact set where we are applying the Bellaiche heat kernel
expansion.
\end{itemize}
From here on (in contrast to the previous section) we work in log space (i.e. $x$ and $y$ will refer to the log of the first and second asset price process),
which will be more convenient when working with the hyperbolic metric and the heat kernel.  We also introduce the following quantities which will be needed in Theorem \ref{thm:SABRrhozero}.
\bq
x^*(k)&:=&\mathrm{argmin}_{x \le \log(\half K)}\bar{H}_K(x) \nn \\
\bar{H}_K(x)&:=&\cosh^{-1}\sqrt{1+ \frac{1}{a_0^2}[\,x^2+(\log(K-e^x))^2]}\nn \\ \varphi(k)&:=&\frac{1}{2}[\bar{H}_K(x^*(k))]^2\eq
and
\bq
a^*(x,y)&:=&\sqrt{a_0^2 + x^2  + y^2 } \nn \,\nn \\
y^*&:=&y^*(k)~\log(K-x^*) \,,\nn \\
\Phi(x,y,a)&:=&\half \rho(a_0,0,0;a,x,y)^2 \,,\nn \\
\Psi(x)&:=&\half\bar{H}_K(x)^2.
\eq

\begin{thm}
\label{thm:SABRrhozero} For the uncorrelated model in \eqref{eq:Model} with $X_0=0,Y_0=0$, we have the following small-time behaviour for basket call options for $K \in (2,\infty)$ with $K \ne K^*:=2 e$:
\bq
\label{eq:SABRCallEstimate}
\mathbb{E}(S^{(1)}_t+S^{(2)}_t-K)^+ &=&
 \psi(k)\, t^{\frac{3}{2}}e^{-\frac{\varphi(k)}{t}}[1+o(1)] \quad \quad \quad \quad (t \to 0)
\eq
where $k=\log K$,and
\bq
\psi(k) &=&\,(1+1_{k>k^*})\,\cdot \frac{e^{-y^*} (e^{2x^*}+e^{2y^*}) }{\sqrt{a_0 a^*(x^*,y^*)}}\,\,\frac{e^{-\half (x^* +y^*)}}{ \sqrt{2\pi \Phi_{aa}(x^*,y^*,a^*(x^*,y^*))\,\Psi''(x^*)} } \,  \frac{1}{\bar{H}_K(x^*)\sinh \bar{H}_K(x^*)} \nn \,
\eq
When $K\in (2,K^*]$, $x^*$ and $\varphi$ simplify to $x^*(K)=\log(\half K)$ and $\varphi(k)=\half[\cosh^{-1}(\sqrt{1+\frac{2}{a_0^2}(k-\log 2)^2})]^2$.
For $K=2 e$, we have the special behaviour:
\bq
\mathbb{E}(S^{(1)}_t+S^{(2)}_t-K)^+ &=&  \frac{1}{\sqrt{a_0 \bar{a} }}\,\,\frac{1}{4 \pi \sqrt{\Phi_{aa}(x^*,y^*,\bar{a})} } \,\frac{\Gamma(\frac{1}{4})\, }{(\xi \bar{H})^{\frac{1}{4}} \, } \, \frac{1}{\bar{H} \sinh\bar{H}} \,t^{\frac{5}{4}} e^{-\frac{\bar{H}^2}{2t}}[1+o(1)] \nn
\eq
where $\bar{H}=\cosh^{-1}\sqrt{1+\frac{2}{a_0^2}}$, $\bar{a}=\sqrt{1+\frac{2}{a_0^2}}$ and $\xi=\frac{5}{12}\frac{1}{\sqrt{2a_0^2 +4}}$.
\end{thm}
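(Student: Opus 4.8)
\nind The plan is to write the basket price as the integral of the payoff against the Lebesgue density,
\[
\ex(S^{(1)}_t+S^{(2)}_t-K)^+ \;=\; \int\!\!\int\!\!\int_{\{e^x+e^y\ge K\}}(e^x+e^y-K)\,\hat p(x,y,a,t)\,dx\,dy\,da ,
\]
and to evaluate it by Laplace's method as $t\to0$. First I would localise: fix a small compact neighbourhood $\mc N$ of the set of minimisers of $\Phi(x,y,a)=\hf\rho(a_0,0,0;a,x,y)^2$ over $\{e^x+e^y\ge K\}$; on $\mc N$ substitute the Bellaiche expansion \eqref{eq:MolchanovFormula2} (valid uniformly on compacts), and on the complement show the contribution is $o(t^{3/2}e^{-\varphi(k)/t})$. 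For the complement, fix a bounded set $B\supset\{(x,y):\tilde\Phi(x,y)\le\varphi(k)+1\}$, where $\tilde\Phi(x,y):=\inf_{a>0}\Phi(x,y,a)=\hf[\cosh^{-1}\sqrt{1+(x^2+y^2)/a_0^2}]^2$. On $\{(x,y)\in B\}\setminus\mc N$ the factors $e^{-\frac12(x+y)}$ and $e^x+e^y-K$ of \eqref{eq:GlobalBound} are bounded and a standard Laplace bound $\int_0^\infty a^{-5/2}\tfrac{\rho}{\sinh\rho}e^{-\rho^2/(2t)}\,da\le C\sqrt t\,e^{-\tilde\Phi(x,y)/t}$ (with an extra $e^{-c/t}$ when $a$ is kept away from $a^*(x,y)$) reduces this part to a two-dimensional integral of $e^{-\tilde\Phi/t}$ over a region on which $\tilde\Phi\ge\varphi(k)+\delta$, hence $o(t^{3/2}e^{-\varphi(k)/t})$. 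On $\{(x,y)\notin B\}$ I would bound $(e^x+e^y-K)^+\le e^x+e^y$ and use that $e^{X_t},e^{Y_t}$ are true martingales, so that $\ex[e^{X_t}\ind_{(X,Y)\notin B}]=\tilde{\pr}^{(1)}((X_t,Y_t)\notin B)$ for the share measure $\tilde{\pr}^{(1)}$; a Girsanov change of drift does not alter the leading small-time large-deviation rate, so this is $O(e^{-(\varphi(k)+1/2)/t})$, again negligible.

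Second, I would pin down the minimiser and the rate function. The unconstrained minimum of $\Phi$ is $0$ at $(0,0,a_0)$, which is infeasible since $K>2$, so the minimum over $\{e^x+e^y\ge K\}$ lies on the boundary $\{e^x+e^y=K\}$. For fixed $(x,y)$ the elementary minimisation of $\rho$ over $a>0$ gives the interior minimiser $a^*(x,y)=\sqrt{a_0^2+x^2+y^2}$ with value $\cosh^{-1}\sqrt{1+(x^2+y^2)/a_0^2}$, which on $y=\log(K-e^x)$ equals $\bar H_K(x)$; minimising $\Psi(x)=\hf\bar H_K(x)^2$ over $x\le\log(K/2)$ (and over the mirror range, by the symmetry $x\mapsto\log(K-e^x)$ of $\bar H_K$) then produces $x^*$, $y^*=\log(K-e^{x^*})$ and $\varphi(k)=\Psi(x^*)$. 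A short computation shows $\Psi''(\log\tfrac K2)$ has the sign of $1-\log\tfrac K2$, i.e.\ of $K^*-K$ with $K^*=2e$: for $K<K^*$ the symmetric point $x^*=y^*=\log\tfrac K2$ is the unique nondegenerate minimiser; for $K>K^*$ it becomes a local maximum and two mirror-image minimisers appear, producing the factor $1+\ind_{k>k^*}$; and for $K=K^*$ the minimiser is unique but degenerate with $\Psi''(x^*)=0$.

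Third comes the Laplace expansion on $\mc N$. After substituting \eqref{eq:MolchanovFormula2} I would integrate out $a$ first (an interior quadratic minimum, contributing $\sqrt{2\pi t/\Phi_{aa}(x^*,y^*,a^*)}$ with $a^*:=a^*(x^*,y^*)$, and replacing $\Phi$ by the reduced potential $\tilde\Phi$ with all $a$-cross-terms absorbed), then change the remaining variables to $(x,g)$ with $g:=e^x+e^y-K\ge0$ (Jacobian $e^{-y}$) and expand $\tilde\Phi$ near $(x^*,0)$: it is quadratic and interior in $x$ along the constraint (contributing $\sqrt{2\pi t/\Psi''(x^*)}$) and linear in $g$ with slope equal to the Lagrange multiplier $\nu^*$ of $\min_{\{e^x+e^y=K\}}\tilde\Phi$, so that $\int_0^\infty g\,e^{-\nu^* g/t}\,dg=t^2/(\nu^*)^2$. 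Since $\tilde\Phi$ is an increasing function of $x^2+y^2$, stationarity forces $x^*/e^{x^*}=y^*/e^{y^*}$, and evaluating the radial derivative of $\tilde\Phi$ at the minimiser gives $\nu^*=\bar H_K(x^*)\big/\big(a^*\sqrt{e^{2x^*}+e^{2y^*}}\big)$. Collecting the prefactor $(2\pi t)^{-3/2}$, the $u_0=\bar H_K/\sinh\bar H_K$, $e^{-\frac12(x^*+y^*)}$ and $\sqrt g=a^{-3}$ contributions of \eqref{eq:MolchanovFormula2}, the Jacobian $e^{-y^*}$, the three Laplace factors, and summing the two symmetric minimisers when $K>K^*$, reproduces $\psi(k)\,t^{3/2}e^{-\varphi(k)/t}$ exactly. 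For $K=K^*$ the only change is that the $x$-integral is governed by the quartic term, $\Psi(x)-\Psi(x^*)\sim c_4(x-x^*)^4$ with $c_4=\tfrac1{24}\Psi^{(4)}(x^*)$, so that $\int_{\mathbb R}e^{-c_4(x-x^*)^4/t}\,dx=\tfrac12\Gamma(\tfrac14)\,c_4^{-1/4}t^{1/4}$ replaces $\sqrt{2\pi t/\Psi''(x^*)}$; this turns $t^{3/2}$ into $t^{5/4}$, and inserting $x^*=y^*=1$, $a^*=\bar a=\sqrt{a_0^2+2}$, $\bar H_{K^*}(x^*)=\bar H$, together with the direct computation of $\Psi^{(4)}(1)$ (which gives $c_4=\xi\bar H$ with $\xi=\tfrac5{12}(2a_0^2+4)^{-1/2}$), yields the stated constant.

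I expect the tail estimate of the first step to be the main obstacle: the global bound \eqref{eq:GlobalBound} by itself is not integrable against the linearly growing payoff over the whole set $\{e^x+e^y\ge K\}$ (it discards the damping factor $e^{-\frac14\int_0^t a_s^2\,ds}$, and the $e^{-\frac12(x+y)}$ prefactor then dominates in the far $(x,y)$-tails), so the reduction to a compact domain must be split as above and the far tails handled separately through the martingale property of $S^{(1)},S^{(2)}$ (with moments bounded uniformly for small $t$) and a small-time large-deviation bound in the share measures. Once the minimiser, the value $\varphi(k)$, and the sign of $\Psi''$ at the symmetric point have been established, the rest — the mixed Gaussian/quartic/one-sided Laplace expansion and the bookkeeping of the constants — is routine.
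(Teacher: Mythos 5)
Your route is correct in substance and reproduces the constants, but it is genuinely different from the paper's. The paper never integrates the payoff against the density directly: it first applies the generalized It\^o (Tanaka--Meyer) formula to write $\ex(S^{(1)}_t+S^{(2)}_t-K)^+-\ex(2S_0-K)^+=\half\int_0^t\ex(a_u^2\,\delta(S^{(1)}_u+S^{(2)}_u-K)[(S^{(1)}_u)^2+(S^{(2)}_u)^2])\,du$, performs Laplace in $a$ and then in $x$ on the delta-function expectation (giving $\mc{J}_1\sim C\,u^{-1/2}e^{-\varphi(k)/u}$), and recovers the $t^{3/2}$ from the final time integration via $\Upsilon(k,t)=\int_0^t u^{-1/2}e^{-k^2/2u}du\sim\frac{2}{k^2}t^{3/2}e^{-k^2/2t}$. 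You instead treat it as a boundary-extremum Laplace problem: quadratic interior minimum in $a$, quadratic (or quartic at $K=2e$) tangentially along $\{e^x+e^y=K\}$, and linear in the normal coordinate $g=e^x+e^y-K$, with $\int_0^\infty g\,e^{-\nu^*g/t}dg=t^2/(\nu^*)^2$ replacing the paper's $\Upsilon$ step. I checked that your Lagrange-multiplier identity $\nu^*=\bar H_K(x^*)/(a^*\sqrt{e^{2x^*}+e^{2y^*}})$ is correct and that $1/(\nu^*)^2$ regenerates exactly the factor $(a^*)^2(e^{2x^*}+e^{2y^*})/\bar H^2$ that in the paper comes from the quadratic-variation weight $a_u^2[(S^1)^2+(S^2)^2]$ and the $\Upsilon$ normalization $2/\bar H^2$ --- a nontrivial consistency check that both derivations pass. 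Your approach is closer to classical Bleistein--Handelsman boundary asymptotics and arguably more transparent about where each factor of the prefactor comes from; the paper's It\^o route buys a cleaner object for the tail analysis, since $\ex(a_t^2\delta(X_t-x,Y_t-y))$ is a density-type quantity whose tails can be controlled by moment bounds rather than by integrating a linearly growing payoff against a density bound.

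Two places where your proposal is materially thinner than what the paper actually has to do. First, the far tail: you correctly diagnose that \eqref{eq:GlobalBound} is not integrable against the payoff over all of $\{e^x+e^y\ge K\}$, but your fix --- martingale property plus ``a Girsanov change of drift does not alter the leading small-time large-deviation rate'' --- is asserted, not proved. This is precisely the hard part of the paper's proof: it runs a Cauchy--Schwarz argument through $\ex(a_t^2T_t^{-1/2}1_{\{X_t\le-R\}})$, bounds $\ex(a_t^{3/2}T_t^{-1})$ via the Matsumoto--Yor joint density of $(a_t,\int_0^ta_s^2ds)$ (Appendix B), and controls $\pr(X_t\le-R)$ by an out-of-the-money put-price bound imported from Theorem 4.6 of \cite{AFLZ14}, choosing $R$ so that the resulting exponent $2\varphi(k)$ strictly beats $\varphi(k)$. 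You would need an equivalent quantitative LDP upper bound under the share measure to close your version. Second, the phase transition: your sign computation of $\Psi''(\log\frac K2)$ shows the symmetric point ceases to be a local minimum for $K>2e$, but does not show that exactly two nondegenerate global minimizers exist (ruling out additional critical points and degeneracy is what Lemma \ref{lem:BFLError} and its convexity/concavity analysis in Appendix A provide, and the count $N$ of minimizers enters the prefactor directly). Both gaps are fillable, and you flag the first one yourself, but neither is routine.
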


\begin{figure}
\centering
\includegraphics[width=140pt,height=140pt]{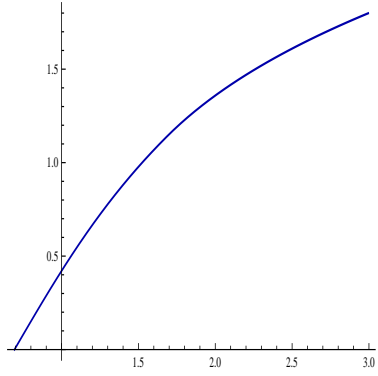}
\nind \caption{Here we have plotted $\varphi(k)$ for the uncorrelated SABR model in \eqref{eq:Model}.}
\end{figure}

\nind \begin{proof}
We break the proof into several parts.
\begin{itemize}
\item
\textbf{Computing the small-time behaviour of $\mathbb{E}(a_t^2 \delta(X_t-x,Y_t-y))$}
\sk

 From the generalized It\^{o} formula, we have
\bq
\label{eq:TanakaMeyer}
\mathbb{E}(S^{(1)}_t+S^{(2)}_t-K)^{+}\,-\,\mathbb{E}(2S_0-K)^{+}
&=& \frac{1}{2} \int_0^t  \mathbb{E}( a_u^2 \,\delta(S^{(1)}_u+S^{(2)}_u-K) [(S^{(1)}_u)^2+(S^{(2)}_u)^2]) du
\eq
\nind To this end, we first compute the small-time behaviour of  $\mathbb{E}(a_t^2 \delta(X_t-x,Y_t-y)) = \int_{a=0}^{\infty} a^2  \hat{p}(x,y,a,t) da$.
 Fix a sufficiently large constant $M>0$, let $\e=\half\min\{a_0,1/\sqrt{a_0^2+2M^2}\}$.
  Using the heat kernel expansion in \eqref{eq:MolchanovFormula2} over a compact domain $(x,y,a)\in[-M,M]\times[-M,M]\times[\e,1/\e]$ in \eqref{eq:MolchanovFormula2}, and Laplace's method (see \cite{SS03}), we find that
 \bq
  \mathcal{I}_1&:=&\int_\e^{1/\e}a^2  \hat{p}(x,y,a,t) da\nn\\
  &=&\frac{(a^*(x,y))^2}{\sqrt{a_0}\, (a^*(x,y))^\frac{5}{2}}\,e^{-\half (x+y)}\,\frac{1}{2 \pi t \sqrt{\Phi_{aa}(x,y,a^*(x,y))}} \, \frac{\rho^*(x,y)}{\sinh \rho^*(x,y)} \, e^{-\frac{\rho^*(x,y)^2}{2t}}[1+o(1)] \nn \\
 &=&  \frac{1}{\sqrt{a_0 a^*(x,y)}}\,e^{-\half (x+y)}\,\frac{1}{2 \pi t \sqrt{\Phi_{aa}(x,y,a^*(x,y))}} \, \frac{\rho^*(x,y)}{\sinh \rho^*(x,y)} \, e^{-\frac{\rho^*(x,y)^2}{2t}}[1+o(1)]\label{I1}
\eq
and
\bq
a^*(x,y) ~ \sqrt{a_0^2 + x^2  + y^2 } \,\quad , \quad \rho^*(x,y) ~ \cosh^{-1} [\sqrt{1 + (x^2  + y^2)/a_0^2}\,]\nn
\eq
\nind are the minimizer (resp. minimum) of $\rho(a_0,0,0;a,x,y)$ over all $a \in \mathbb{R}^+$.  Moreover, the function $\rho(a_0,0,0;a,x,y)$ is strictly decreasing in $a$ over $(0,a^*(x,y)]$ and is strictly increasing in $a$ over $[a^*(x,y),\infty)$. On the other hand, using the global bound \eqref{eq:GlobalBound} we have that
\bq
0\le\mathcal{I}_2&:=&\int_0^{\e}a^2  \hat{p}(x,y,a,t) da\nn\\
&\le&e^{-\half(x+y)}\frac{e^{-\frac{1}{8}t}}{\sqrt{a_0}}\frac{1}{(2\pi t)^{\frac{3}{2}}}\int_0^{\e}
\frac{1}{\sqrt{a}}\frac{\rho}{\sinh\rho}e^{-\frac{\rho^2}{2t}}da\nn\\
&\le&e^{-\half(x+y)}\frac{e^{-\frac{1}{8}t}}{\sqrt{a_0}}\frac{1}{(2\pi t)^{\frac{3}{2}}}\int_0^{\e}
\frac{1}{\sqrt{a}}\frac{\rho(a_0,0,0;\e,x,y)}{\sinh\rho(a_0,0,0;\e,x,y)}e^{-\frac{\rho(a_0,0,0;\e,x,y)^2}{2t}}da\nn\\
&\le&2\sqrt{\e}\, e^{-\half(x+y)}\frac{1}{\sqrt{a_0}}\frac{1}{(2\pi t)^{\frac{3}{2}}}\frac{\rho(a_0,0,0;\e,x,y)}{\sinh\rho(a_0,0,0;\e,x,y)}e^{-\frac{\rho(a_0,0,0;\e,x,y)^2}{2t}},\label{I2}
\eq
where the second line follows from the fact that both $\frac{\rho}{\sinh\rho}$ and $e^{-\frac{\rho^2}{2t}}$ are positive, decreasing functions of $\rho>0$, and that $\rho(a_0,0,0;a,x,y)$ is strictly decreasing for $a\in(0,\e]$. Similarly, $\rho \sim \log a$ as $a\to\infty$ so $\frac{\rho}{\sinh\rho}=o(e^{-\frac{3}{4}\rho})=o(a^{-\frac{3}{4}})$ as $a\to\infty$. Hence there exists a constant $C>0$ such that $0\le \frac{\rho}{\sinh\rho}\le Ca^{-\frac{3}{4}}$ for all $a>\frac{1}{\e}$ and $(x,y)\in[-M,M]\times[-M,M]$. It follows that
\bq
0\le\mathcal{I}_3&:=&\int_{1/\e}^\infty a^2  \hat{p}(x,y,a,t) da\nn\\
&\le&e^{-\half(x+y)}\frac{e^{-\frac{1}{8}t}}{\sqrt{a_0}}\frac{1}{(2\pi t)^{\frac{3}{2}}}\int_{1/\e}^\infty
\frac{1}{\sqrt{a}}\frac{\rho}{\sinh\rho}e^{-\frac{\rho^2}{2t}}da\nn\\
&\le&e^{-\half(x+y)}\frac{e^{-\frac{1}{8}t}}{\sqrt{a_0}}\frac{C}{(2\pi t)^{\frac{3}{2}}}\int_{1/\e}^\infty
a^{-\frac{5}{4}}e^{-\frac{\rho(a_0,0,0;1/\e,x,y)^2}{2t}}da\nn\\
&\le&4 \,\e^{\frac{1}{4}}e^{-\half(x+y)}\frac{C}{\sqrt{a_0}}\frac{1}{(2\pi t)^{\frac{3}{2}}}\,e^{-\frac{\rho(a_0,0,0;1/\e,x,y)^2}{2t}}.\label{I3}
\eq
Since $\rho^*(x,y)<\min\{\rho(a_0,0,0;\e,x,y),\rho(a_0,0,0;1/\e,x,y)\}$, we know that $\mathcal{I}_i=o(\mathcal{I}_1)$ for $i=2,3$, as $t\to0$. Thus, for any $(x,y)\in[-M,M]\times[-M,M]$, we have
\be
 \mathbb{E}(a_t^2 \delta(X_t-x,Y_t-y)) =   \frac{1}{\sqrt{a_0 a^*(x,y)}}\,e^{-\half (x+y)}\,\frac{1}{2 \pi t \sqrt{\Phi_{aa}(x,y,a^*(x,y))}} \, \frac{\rho^*(x,y)}{\sinh \rho^*(x,y)} \, e^{-\frac{\rho^*(x,y)^2}{2t}}[1+o(1)]\quad\quad\quad (t\to0) \quad \quad \label{J1}
\ee
and this expansion is uniform for all $(x,y)\in[-M,M]\times[-M,M]$.
\sk
%
\sk
\sk
On the other hand, by standard properties of the Dirac delta function, we also have
\bq
\mathbb{E}(a_t^2 \delta(S^{(1)}_t-e^x,S^{(2)}_t-e^y)) &=& \frac{1}{e^x e^y}\mathbb{E}(a_t^2 \delta(X_t-x,Y_t-y))
\label{eq:Dirac} \,.
\eq
\item \textbf{Computing the convolution}

\sk
\nind Applying a convolution to this we see that
\bq
\mc{J}&:=&\mathbb{E}(a_t^2 \delta(S^{(1)}_t+S^{(2)}_t-K) [(S^{(1)}_t)^2+(S^{(2)}_t)^2] )\nn \\
&=&\int_{e^x=0}^K\mathbb{E}(a_t^2 \delta(S_t^{(1)}-e^x,S_t^{(2)}-e^{y_K(x)}) [e^{2x}+e^{2y_K(x)}]d(e^x)\nn\\
&=&\int_{-\infty}^{\log K}\frac{[e^{2x}+e^{2y_K(x)}]}{e^{y_K(x)}}\mathbb{E}(a_t^2\delta(X_t-x,Y_t-y_K(x))dx\label{J3}
\eq
where $y_K(x):=\log(K-e^x)$.  Now fix $R>0$ sufficiently large,
we may use the estimate in \eqref{J1} for the compact domain $\{(x,y):x\in[-R,\log(K-e^{-R})], e^x+e^y=K\}$, we have that
\bq
\mc{J}_1&:=&\int_{-R}^{\log(K-e^{-R})}\frac{[e^{2x}+e^{2y_K(x)}]}{e^{y_K(x)}}\mathbb{E}(a_t^2\delta(X_t-x,Y_t-y_K(x)))dx\nn\\
&=& [1+o(1)]\int_{-R}^{\log(K-e^{-R})}  \frac{ [e^{2x}+e^{2y_K(x)}]}{e^{y_K(x)}} \frac{e^{-\half (x+y_K(x))}}{\sqrt{a_0 a^*(x,y_K(x))}}\,\frac{\bar{H}_K(x)}{\sinh \bar{H}_K(x)} \frac{ e^{-\frac{\bar{H}_K(x)^2}{2t}}\, dx}{2 \pi t   \sqrt{\Phi_{aa}(x,y_K(x),a^*(x,y_K(x)))} }  \nn
\eq

\nind as $t\to0$, where
\bq
\bar{H}_K(x)&=&\rho^*(x,y_K(x)) ~ \cosh^{-1}\sqrt{1+ [\,x^2+(\log(K-e^x))^2]/a_0^2}  \,, \nn \,. \nn
\eq
Note that we can choose $R>0$ sufficiently large so that $\bar{H}_K(x)$ is strictly decreasing over $(-\infty,-R)$, and that $\bar{H}_K(x)$ is strictly increasing over $(\log(K-e^{-R}),\log K)$.
To apply Laplace's method for $\mc{J}_1$, we have to minimize $\bar{H}_K(x)$ over all $x$ in the allowable range $[-R,\log(K-e^{-R})]$. By the monotonicity of $\bar{H}_K(x)$, the minimum over $[-R,\log(K-e^{-R})]$ is the same as $\inf_{x<\log K}\bar{H}_K(x)$. Thus we have to minimize $x^2+\log(K-e^x)$ over $x$ in this range, for which we will need the following Lemma:

\sk
\begin{lem}
\label{lem:BFLError}
Let $\bar{h}_K(z)=(\log z)^2+(\log (K-z))^2$ and set $K^*=2e$.  Then we have the following classification for the minimizer(s) of $\bar{h}_K(z)$:
\begin{itemize}
\item If $K\in(0,K^*)$, $\bar{h}_K''(z)=\bar{h}_K''(K-z)>0$ for all $z\in(0,K)$, so $z^*=\half K$ is the unique global minimum of $\bar{h}_K$;
\item If $K=K^*$, $\bar{h}_K''(z)=\bar{h}_K''(K-z)>0$ for all $z\in(0,K)\backslash\{\half K\}$ and $\bar{h}_K''(\half K)=\bar{h}_K'''(\half K)=0$ and $\bar{h}_K''''(\half K)>0$, so $z^*=\half K$ is the unique global minimum of $\bar{h}_K$;

\item If $K\in(K^*,\infty)$, $\bar{h}_K$ has two global minima, $x^*$ and $K-x^*$, with $x^*\in(0,\half K)$.
\end{itemize}
\end{lem}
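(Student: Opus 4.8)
The plan is to study the sign of $\bar{h}_K''$ together with the symmetry $\bar{h}_K(z)=\bar{h}_K(K-z)$. Writing $F(w)=\frac{\log w}{w}$, so that $F'(w)=\frac{1-\log w}{w^{2}}$, one gets $\bar{h}_K'(z)=2F(z)-2F(K-z)$ and $\bar{h}_K''(z)=2F'(z)+2F'(K-z)$. In particular $z=\tfrac12 K$ is always a critical point (and, by the symmetry, every odd derivative of $\bar{h}_K$ vanishes there), with $\bar{h}_K''(\tfrac12 K)=\frac{16(1-\log(K/2))}{K^{2}}$, which is $>0$, $=0$, or $<0$ according as $K<K^{*}$, $K=K^{*}$, or $K>K^{*}$ (recall $K^{*}=2e$). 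Since $\bar{h}_K(z)\to+\infty$ as $z\to0^{+}$ or $z\to K^{-}$, a global minimiser always exists.

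For $K<K^{*}$ I would show $\bar{h}_K$ is strictly convex on $(0,K)$, i.e. $F'(z)+F'(K-z)>0$. Taking (by symmetry) $z\le\tfrac12 K<e$ makes $F'(z)>0$; if also $K-z\le e$ then $F'(K-z)\ge0$ and we are done, while if $K-z>e$ one writes $z/e=1-x$, $(K-z)/e=1+y$ with $x,y\in(0,1)$ and observes that $K<2e$ is precisely $y<x$; then $-\log(1-x)>\log(1+x)\ge\log(1+y)$ (the first inequality because $-\log(1-x^{2})>0$) together with $z^{2}<e^{2}<(K-z)^{2}$ gives $\frac{-\log(1-x)}{z^{2}}>\frac{\log(1+y)}{(K-z)^{2}}$, i.e. $\bar{h}_K''(z)>0$. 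Strict convexity makes $z=\tfrac12 K$ the unique critical point, hence the unique global minimum. The borderline case $K=K^{*}$ is identical except that now $y=x$, so the same inequality still gives $\bar{h}_K''(z)>0$ for $z\ne\tfrac12 K$ while $\bar{h}_K''(\tfrac12 K)=0$; $\bar{h}_K'''(\tfrac12 K)=0$ by symmetry, and the Taylor expansion $\bar{h}_{2e}(e+t)=(1+\log(1+t/e))^{2}+(1+\log(1-t/e))^{2}=2+\tfrac56(t/e)^{4}+O(t^{6})$ yields $\bar{h}_{2e}''''(e)=20/e^{4}>0$; monotonicity of $\bar{h}_K'$ on each side of $\tfrac12 K$ (from $\bar{h}_K''\ge0$) then again gives a unique global minimum at $\tfrac12 K$.

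For $K>K^{*}$ the point $z=\tfrac12 K$ is a strict local maximum, so the global minimum is attained at some $x^{*}\ne\tfrac12 K$; by symmetry $K-x^{*}$ is also a global minimiser, and exactly one of the two lies in $(0,\tfrac12 K)$, which we take to be $x^{*}$. The remaining — and main — point is that $\bar{h}_K$ has a \emph{unique} critical point in $(0,\tfrac12 K)$. Since $\bar{h}_K'(z)\to-\infty$ as $z\to0^{+}$ while $\bar{h}_K'(\tfrac12 K)=0$ with $\bar{h}_K''(\tfrac12 K)<0$ (so $\bar{h}_K'>0$ just to the left of $\tfrac12 K$), continuity forces at least one critical point in $(0,\tfrac12 K)$; I would finish by showing that at \emph{every} critical point $z_{0}\in(0,\tfrac12 K)$ one has $\bar{h}_K''(z_{0})>0$. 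This makes each such point an isolated simple zero of $\bar{h}_K'$ at which it crosses from negative to positive, and since two consecutive such crossings would require a positive-to-negative crossing in between (impossible), there is exactly one; it is then the minimiser on $(0,\tfrac12 K]$, so $\{x^{*},K-x^{*}\}$ are the only global minima.

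It remains to prove the positivity $\bar{h}_K''(z_{0})>0$ at a critical point $z_{0}\in(0,\tfrac12 K)$, which is the crux. There $F(z_{0})=F(K-z_{0})$ with $z_{0}\ne K-z_{0}$, and since $F$ is strictly increasing on $(0,e]$, strictly decreasing on $[e,\infty)$, and positive on $(1,\infty)$, this forces $a:=z_{0}\in(1,e)$, $b:=K-z_{0}\in(e,\infty)$, and a common value $c:=\frac{\log a}{a}=\frac{\log b}{b}$. Using $\log a=ca$ and $\log b=cb$, a short computation gives $\bar{h}_K''(z_{0})=\frac{2}{ab}\big(\tfrac{a}{b}+\tfrac{b}{a}-\log(ab)\big)$, so it suffices to show $\tfrac{a}{b}+\tfrac{b}{a}>\log(ab)$. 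Setting $r=b/a>1$, the constraint reduces to $(r-1)\log a=\log r$, whence $\log(ab)=\frac{(r+1)\log r}{r-1}$, and after multiplying by $r-1>0$ the claim becomes $\frac{(r^{2}+1)(r-1)}{r(r+1)}>\log r$ for $r>1$. I would prove this in two elementary steps: $\frac{(r^{2}+1)(r-1)}{r(r+1)}-\frac{r^{2}-1}{2r}=\frac{(r-1)^{3}}{2r(r+1)}>0$, and $\frac{r^{2}-1}{2r}-\log r>0$ for $r>1$ (it vanishes at $r=1$ and has derivative $\frac{(r-1)^{2}}{2r^{2}}\ge0$). I expect this reduction to the one-variable inequality in $r$ — in particular the observation that criticality plus the shape of $F$ pins $z_{0}$ strictly below $e$ — to be the step requiring the most care; everything else is elementary convexity and Rolle-type bookkeeping.
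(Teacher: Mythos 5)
Your proposal is correct, and for the main case $K>K^*$ it takes a genuinely different route from the paper. The paper works throughout with the sign and monotonicity of $\bar{h}_K''$: for $K<2e$ it proves strict convexity of $\bar h_K$ via convexity of $u(z)=\tfrac12 g''(z)=\frac{1-\log z}{z^2}$ on $(0,e^{11/6})$ and a Jensen-type bound $u(z)+u(K-z)\ge 2u(\tfrac12 K)>0$; for $K>2e$ it splits into the sub-cases $K\in(2e,2e^{3/2}]$ and $K>2e^{3/2}$, shows $\bar h_K''$ is strictly decreasing on $(0,\tfrac12 K]$ (using an auxiliary lemma on symmetrized convex functions), and deduces a convex--concave--convex profile with exactly one sign change of $\bar h_K''$, hence exactly one root of $\bar h_K'$ in the convex piece. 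You instead bypass the global shape of $\bar h_K''$ entirely: you show that \emph{every} critical point $z_0\in(0,\tfrac12 K)$ satisfies $\bar h_K''(z_0)>0$, by exploiting the criticality relation $\frac{\log a}{a}=\frac{\log b}{b}$ (with $a=z_0$, $b=K-z_0$) to pin $z_0\in(1,e)$, reduce $\bar h_K''(z_0)=\frac{2}{ab}\bigl(\frac{a}{b}+\frac{b}{a}-\log(ab)\bigr)$ to the one-variable inequality $\frac{(r^2+1)(r-1)}{r(r+1)}>\log r$ for $r=b/a>1$, and then conclude uniqueness by a crossing argument (all zeros of $\bar h_K'$ on $(0,\tfrac12K)$ are simple negative-to-positive crossings, so there is exactly one). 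I checked the algebra: the identity $\frac{(r^2+1)(r-1)}{r(r+1)}-\frac{r^2-1}{2r}=\frac{(r-1)^3}{2r(r+1)}$ and the bound $\frac{r^2-1}{2r}>\log r$ are both right, as is the Taylor coefficient $\frac56(t/e)^4$ at $K=2e$. Your approach avoids the paper's case split at $K=2e^{3/2}$ and its auxiliary convexity lemma, at the cost of losing the structural information (the exact convex/concave profile of $\bar h_K$) that the paper's argument yields as a by-product; for $K<2e$ the two proofs both establish strict convexity but via different elementary inequalities. One small presentational point: when you invoke "two consecutive negative-to-positive crossings would require a positive-to-negative crossing in between", you should note explicitly that such an intermediate crossing is itself a critical point with $\bar h_K''\le 0$, contradicting what you just proved -- but this is exactly what you intend, so the gap is cosmetic.
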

\nind \begin{proof}[Proof of Lemma \ref{lem:BFLError}]
See Appendix A (see also page 3 in \cite{BFL14} for a statement of this result and a plot of the three different cases).
\end{proof} \\

 Let us denote by
\be
\Psi(x):=\frac{1}{2}\bar{H}_K(x)^2  \quad, \quad  \varphi(k):=\inf_{x<k}\Psi(x)>0,\,\,\,\text{where}\,\,\,k=\log K.
\ee

\item \textbf{$K \in (0,2e)$}

\sk
Applying Laplace's method to $\mc{J}_1$ for $K \in (0,2e)$, we see that
\bq
\label{eq:JayOne}
\mc{J}_1&=&  \,\frac{e^{-y^*} (e^{2x^*}+e^{2y^*})}{\sqrt{a_0 a^*(x^*,y^*)}}\,\frac{e^{-\half (x^* +y^*)}}{\sqrt{2 \pi t} \sqrt{\Phi_{aa}(x^*,y^*,a^*(x^*,y^*))} } \,\frac{1}{\sqrt{\Psi''(x^*)}} \, \frac{\bar{H}_K(x^*)}{\sinh \bar{H}_K(x^*)} \, e^{-\frac{\varphi(k)}{t}}[1+o(1)] \,,\nn \\
\eq
where $x^*\equiv x^*(K)=\log(\half K)$
and $y^*=y_K(x^*)$, and $\varphi(k)$ can be calculated explicitly as
 \bq
\varphi(k)&=&\half[\cosh^{-1}(\sqrt{1+\frac{2}{a_0^2}(k-\log 2)^2})\,]^2 \nn
 \eq
and we can re-write $x^*$ as $x^*=k-\log 2$.  By trivial adjustment to \eqref{eq:JayOne} (using \eqref{eq:Dirac}) we also see that the exact density $f(K)$ of $S^1_t+S^2_t$ has the asymptotic behaviour
\bq
\label{eq:SABRDensityApproxRhoZero}
f(K) &=& e^{-y^*}  \,\frac{1}{\sqrt{a_0} a^*(x^*,y^*)^{\frac{5}{2}}}\,\frac{1}{\sqrt{2 \pi t} \sqrt{\Phi_{aa}(x^*,y^*,a^*(x^*,y^*))} } \,\frac{1}{\sqrt{\Psi''(x^*)}} \, \frac{\bar{H}_K(x^*)}{\sinh \bar{H}_K(x^*)} \, e^{-\frac{\varphi(k)}{t}}[1+o(1)] \,.\nn \\
\eq

\item \textbf{$K>2e$}

\sk

 Similarly, for $K>2e$ we have
\bq
\label{eq:JayOneDouble}
 \mc{J}_1&=&  \,\frac{2e^{-y^*} (e^{2x^*}+e^{2y^*})}{\sqrt{a_0 a^*(x^*,y^*)}}\,\frac{e^{-\half (x^* +y^*)}}{\sqrt{2 \pi t} \sqrt{\Phi_{aa}(x^*,y^*,a^*(x^*,y^*))} } \,\frac{1}{\sqrt{\Psi''(x^*)}} \, \frac{\bar{H}_K(x^*)}{\sinh \bar{H}_K(x^*)} \, e^{-\frac{\varphi(k)}{t}}[1+o(1)] \,,
\eq
where now $x^*=\mathrm{argmin}_{x \le \log K}\bar{H}_K(x)=\mathrm{argmin}_{x \le \log(\half K)}\bar{H}_K(x)$ (this equality follows from the second bullet point in Lemma \ref{lem:BFLError}), $y^*=y_K(x^*)$, and note that we also write this as $\log \mathrm{argmin}_{x \in (0, \half K]}\bar{h}_K(x)$ where $\bar{h}_K(\cdot)$ is defined as in Lemma \ref{lem:BFLError}.  Note that \eqref{eq:JayOneDouble} is twice the expression in \eqref{eq:JayOne} because we now have two saddlepoints.

 \bs
\item \textbf{The special value $K^*=2e$}

\sk
 At the special value $K^*=2e$, the second derivative term $\bar{H}_K''(x^*)$ vanishes, and we have
\bq
\bar{H}_{K^*}(x) &=&\bar{H}_{K^*}(x^*)\,+\, \xi(x-x^*)^4 \,+\,O((x-x^*)^5) \nn \,,
\eq
so
\bq
\bar{H}_{K^*}(x)^2 &=&\bar{H}_{K^*}(x^*)^2+ 2\xi \bar{H}_{K^*}(x^*) (x-x^*)^4 +O((x-x^*)^5) \nn \,,
\eq
where $x^*\equiv x^*(K^*)=1$ (and hence $y^*=y_K(x^*)=1$), and $\xi=\frac{5}{24\sqrt{1+a_0^2/2}}$.  Using the identity $
\int_{-\infty}^{\infty} e^{-\zeta x^4} dx = \frac{\Gamma(\frac{1}{4})}{2 \zeta^{\frac{1}{4}}}$, we now obtain
\bq
\mc{J}_1&=& \ \frac{e^2+e^2}{e^1}e^{-\half (1+1)}\frac{1 }{\sqrt{a_0 \bar{a} }}\,\,\frac{1}{2 \pi t\sqrt{\Phi_{aa}(x^*,y^*,\bar{a})} } \,\frac{\Gamma(\frac{1}{4})\, }{2(\xi  \bar{H}_{K^*}(x^*))^{\frac{1}{4}} \, t^{-\frac{1}{4}}} \, \frac{\bar{H}}{\sinh \bar{H}} \, e^{-\frac{\bar{H}^2}{2t}}[1+o(1)] \nn \\
&=& \,\frac{2}{\sqrt{a_0 \bar{a} }}\,\,\frac{1}{4 \pi \sqrt{\Phi_{aa}(x^*,y^*,\bar{a})} } \,\frac{\Gamma(\frac{1}{4})\, }{(\xi \bar{H})^{\frac{1}{4}} \, t^{\frac{3}{4}}} \, \frac{\bar{H}}{\sinh \bar{H}} \, e^{-\frac{\bar{H}^2}{2t}}[1+o(1)]
\label{eq:J1Special}
\eq
where $\bar{H}=\bar{H}_{K^*}(x^*)=\cosh^{-1}\sqrt{1+\frac{2}{a_0^2}}$, $\bar{a}=a^*(x^*,y^*)=\sqrt{1+\frac{2}{a_0^2}}$.


\sk
\sk

\item \textbf{Controlling the tail integrals}

\sk

We now control the tail integrals in \eqref{J3}. To this end, denote
$
T_t:=\int_0^ta_s^2ds.$ Then we know that, $X_t|T_t$ and $Y_t|T_t$ are independent  $N(-\half T_t, T_t)$ normal random variables, and hence for all $(x,y)\in\mathbb{R}^2$,
\bq
\ex(\delta(X_t-x,Y_t-y)|a_t,T_t) ~ \frac{1}{dxdy}\mathbb{P}(X_t\in dx, Y_t\in dy|a_t,T_t)
&=&\frac{1}{dxdy}\mathbb{P}(X_t\in dx|a_t,T_t)\mathbb{P}(Y_t\in dy|a_t,T_t)\nn\\
&\le&\frac{1}{dx} \mathbb{P}(X_t\in dx|a_t,T_t)\frac{1}{\sqrt{2\pi T_t}}\nn\\
&=&\frac{1}{\sqrt{2\pi T_t}}\,\ex(\delta(X_t-x)|a_t,T_t).
\eq
Thus,
 we have
\bq
0\le\mc{J}_2&:=&\int_{-\infty}^{-R}\frac{[e^{2x}+e^{2y_K(x)}]}{e^{y_K(x)}}\mathbb{E}(a_t^2\delta(X_t-x,Y_t-y_K(x))dx\nn\\
&\le &\frac{e^{-2R}+K^2}{K-e^{-R}}\int_{-\infty}^{-R}\ex(a_t^2\delta(X_t-x,Y_t-y_K(x)))dx\nn\\
&\le &\frac{e^{-2R}+K^2}{K-e^{-R}}\frac{1}{\sqrt{2\pi}}\int_{-\infty}^{-R}\ex(a_t^2T_t^{-\half}\delta(X_t-x))dx\nn\\
&= &\frac{e^{-2R}+K^2}{K-e^{-R}}\frac{1}{\sqrt{2\pi}}\ex(a_t^2T_t^{-\half}1_{X_t\le -R}).\label{eq:29_}
\eq
By the Cauchy-Schwarz inequality we have
\bq
\label{eq:CSIneq}
\ex(a_t^2T_t^{-\half}1_{X_t\le -R})
&\le&\sqrt{\ex(a_t^{\frac{5}{2}}1_{X_t\le -R})}\sqrt{\ex(a_t^{\frac{3}{2}}T_t^{-1})}\nn\\
&\le&\sqrt[4]{\ex(a_t^{5})}\sqrt[4]{\mathbb{P}(X_t\le -R)}\sqrt{\ex(a_t^{\frac{3}{2}}T_t^{-1})}\nn\\
&\le &a_0^{\frac{5}{4}}e^{\frac{5}{2}t}\cdot\frac{e^{\frac{\pi^2}{4t}-\frac{t}{16}}}{\sqrt[4]{2\pi t}}
\sqrt[4]{\mathbb{P}(X_t\le -R)},
\eq
where the last line is due to the fact that $a_t^5=a_0^5\exp(5W_t^3-\frac{5}{2}t)$ and the bound for $\ex(a_t^{\frac{3}{2}}T_t^{-1})$ in Appendix \ref{proofexp}.

\sk

We now bound the probability $\mathbb{P}(X_t\le -R)$ by choosing an appropriate $R>0$. To that end, let $c\equiv 8\varphi(k)+{\pi^2}>0$. For fixed $a_0>0$ and $x_1<0$, it can be easily verified that
\be
\inf_{a>0}\{1+\frac{x_1^2+(a-a_0)^2}{2a_0a}\}=\inf_{a>0}\{\frac{x_1^2+a_0^2+a^2}{2a_0a}\}=\frac{1}{2a_0}\inf_{a>0}\{\frac{x_1^2+a_0^2}{a}+a\}=\frac{1}{2a_0}\cdot2\sqrt{x_1^2+a_0^2}=\sqrt{1+x_1^2/a_0^2} \nn \,.
\ee
Thus, by choosing $x_1=-a_0\sqrt{\cosh^2\sqrt{2c}-1}=-a_0\sinh\sqrt{2c}<0$, from \eqref{eq:dist} (with $y=y_0$) we see that $\sqrt{2c}$ is the minimum distance from point $(0,a_0)$ to the vertical line $x=x_1$ in $\mathbb{H}^2=\{(x,a):a>0\}$.
By Theorem 4.6 of \cite{AFLZ14}, for $x_1<0$,  we know that for all $t>0$ sufficiently small,
\be
\ex(e^{X_t}-e^{x_1})^+-(1-e^{x_1})\le C(x_1)e^{-\frac{c}{t}}t^{\frac{3}{2}},
\ee
where $C(x_1)>0$ is a constant that depends on $x_1$ only, hence by put-call parity, we have
\be
\ex(e^{x_1}-e^{X_t})^+=\ex(e^{X_t}-e^{x_1})^+-(1-e^{x_1}) \le  C(x_1)e^{-\frac{c}{t}}t^{\frac{3}{2}} .\label{q2}
\ee
Finally, observe that
\bq
\ex(e^{x_1}-e^{X_t})^+ =~\ex(e^{x_1}-S_t^{(1)})^+ ~
\int_0^\infty \pr((e^{x_1}-S_t^{(1)})^+\ge u)du
&=&\int_0^\infty \pr(e^{x_1}-S_t^{(1)}\ge u)du\nn\\
&=&\int_0^{e^{x_1}} \pr(S_t^{(1)}\le e^{x_1}-u)du\nn\\
&=&\int_0^{e^{x_1}}\pr(S_t^{(1)}\le v)dv\nn\\
&\ge&\int_{\frac{1}{2}e^{x_1}}^{e^{x_1}}\pr(S_t^{(1)}\le v)dv\nn\\
&\ge&\frac{1}{2}e^{x_1}\pr(S_t^{(1)}\le \frac{1}{2}e^{x_1})\nn\\
&=&\frac{1}{2}e^{x_1}\pr(X_t\le x_1-\log 2)\label{q3} \,.
\eq
Combining \eqref{q2} and \eqref{q3}, we have
\bq
\pr(X_t\le x_1-\log 2)\le 2e^{-x_1}C(x_1) e^{-\frac{c}{t}}t^{\frac{3}{2}} \nn \,.
\eq
\bq
\ex(a_t^2T_t^{-\half}1_{\{X_t\le -R\}})
&\le &a_0^{\frac{5}{4}}e^{\frac{5}{2}t}\cdot\frac{e^{\frac{\pi^2}{4t}-\frac{t}{16}}}{\sqrt[4]{2\pi t}}
\sqrt[4]{\mathbb{P}(X_t\le -R)},
\eq
It follows that we can choose any $R\ge\log2-x_1$, then we have
\bq
\label{eq:35}
\ex(a_t^2T_t^{-\half}1_{\{X_t\le -R\}}) &\le& a_0^{\frac{5}{4}}e^{\frac{39}{16}t+\frac{\pi^2}{4t}}\frac{1}{\sqrt[4]{2\pi t}}\sqrt[4]{\pr(X_t\le -R)} \nn \\
&\le & a_0^{\frac{5}{4}}e^{\frac{39}{16}t+\frac{\pi^2}{4t}}\frac{1}{\sqrt[4]{2\pi t}}\sqrt[4]{2e^{-x_1}C(x_1) e^{-\frac{c}{t}}t^{\frac{3}{2}} }\, \nn \\
&=& \sqrt[4]{\frac{C(x_1)e^{-x_1}}{\pi}}\cdot a_0^{\frac{5}{4}} e^{\frac{39}{16}t} \,t^{\frac{1}{8}} e^{\frac{\pi^2-c}{4t}}\nn \\
&\le& \sqrt[4]{\frac{C(x_1)e^{-x_1}}{\pi}}\cdot a_0^{\frac{5}{4}} e^{\frac{39}{16}t}e^{-\frac{2\varphi(k)}{t}}t^{\frac{1}{8}}
\eq
(recall that $c\equiv 8\varphi(k)+{\pi^2}$, $x_1=-a_0\sqrt{\cosh^2\sqrt{2c}-1}$ and we have chosen $R\ge\log2-x_1$).  Thus, we have $\mc{J}_2=o(\mc{J}_1)$ as $u\to0$ because we have $2\varphi(k)$ as opposed to $\varphi(k)$ in the exponent for $\mc{J}_2$.  Similarly, for
\bq\mc{J}_3&:=&\int_{\log(K-e^{-R})}^{\log K}\frac{[e^{2x}+e^{2y_K(x)}]}{e^{y_K(x)}}\mathbb{E}(a_t^2\delta(X_t-x,Y_t-y_K(x))dx\nn\\
&=&\int_{-\infty}^{-R}\frac{[e^{2y}+e^{2x_K(y)}]}{e^{x_K(y)}}\mathbb{E}(a_t^2\delta(X_t-x_K(y),Y_t-y)dy, \nn\eq
where $x_K(y)=\log(K-e^y)$, we also have $\mc{J}_3=o(\mc{J}_1)$ as $u\to\infty$. Overall, we have
\bq
\mc{J}&=&\mc{J}_1(1+o(1))=(1+o(1))\int_{-R}^{\log(K-e^{-R})}\frac{[e^{2x}+e^{2y_K(x)}]}{e^{y_K(x)}}\mathbb{E}(a_t^2\delta(X_t-x,Y_t-y_K(x)))dx,\nn
\eq
as $t\to 0$.

\sk
\sk
\sk

\item \textbf{Final step: computing the small-time basket call option asymptotics}

\bs

We now recall that
\bq
\label{eq:TM}
\mathbb{E}(S^{(1)}_t+S^{(2)}_t-K)^{+}\,-\,\mathbb{E}(2S_0-K)^{+}
&=& \frac{1}{2} \int_0^t  \mathbb{E}( a_u^2 \,\delta(S^{(1)}_u+S^{(2)}_u-K) [(S^{(1)}_u)^2+(S^{(2)}_u)^2]) du \,.
\eq
We first deal with the case $K \ne 2e$ and we recall the following well known asymptotic relation
\bq
 \label{eq:Awkward}
  \Upsilon(k,t)\,\,\,\,:=\,\,\,\, \int_0^t  \frac{1}{\sqrt{u   }}\,e^{-\frac{k^2}{2u}} du  ~ \frac{2 e^{-\frac{k^2}{2 t}} t - k\sqrt{2\pi t}}{\sqrt{t}}\,+\,k \sqrt{2 \pi} \,\mathrm{Erf}(\frac{k}{\sqrt{2t}})    &=&\frac{2}{k^2}\,t^{\frac{3}{2}}  e^{-\frac{k^2}{2 t}}\,\,[1+O(\frac{t}{k^2})]  \,
 \eq
for $k>0$, which just follows from the well known result that $\Phi^c(z)\sim \frac{1}{z\sqrt{2 \pi}}e^{-z^2/2}[1+O(\frac{1}{z^2})]$ as $z \to \infty$.  Note that the leading order error term here is $O(\frac{t}{k^2})$ so this approximation will generally work badly if $\frac{t}{k^2}$ is not $\ll 1$, which is often the case in practice in financial applications (we will return to this point in the numerics part in section 4.3).

\sk
Now recall that
\bq
\mc{J}&:=&\mathbb{E}(a_t^2 \delta(S^{(1)}_t+S^{(2)}_t-K) [(S^{(1)}_t)^2+(S^{(2)}_t)^2] )~ \mc{J}_1\,(1+o(1)) \eq
and from \eqref{eq:JayOne} and \eqref{eq:JayOneDouble} we know that
\bq
\mc{J}_1&=&   (1+1_{k>k^*})\,\frac{e^{-y^*} (e^{2x^*}+e^{2y^*})}{\sqrt{a_0 a^*(x^*,y^*)}}\,\frac{e^{-\half (x^* +y^*)}}{\sqrt{2 \pi t} \sqrt{\Phi_{aa}(x^*,y^*,a^*(x^*,y^*))} } \,\frac{1}{\sqrt{\Psi''(x^*)}} \, \frac{\bar{H}_K(x^*)}{\sinh \bar{H}_K(x^*)} \, e^{-\frac{\varphi(k)}{t}}[1+o(1)] \nn \\ 
\label{eq:J11111111111111}
\eq
where $x^*\equiv x^*(K)=\log(\half K)$, $y_K(x):=\log(K-e^x)$ and $\varphi(k)=\frac{1}{2}[\bar{H}_K(x^*)]^2$.  Applying \eqref{eq:Awkward} to \eqref{eq:TM} using \eqref{eq:J11111111111111}, we obtain following small-time behaviour for basket call options for $K \in (2,\infty)$ with $K \ne K^*:=2 e$:
\bq
\mathbb{E}(S^{(1)}_t+S^{(2)}_t-K)^+ &=&
 \psi(k)\, t^{\frac{3}{2}}e^{-\frac{\varphi(k)}{t}}[1+o(1)] \quad \quad \quad \quad (t \to 0)
\eq
where $k=\log K$, and
\bq
\psi(k) &=&\,(1+1_{k>k^*})\,\cdot \frac{e^{-y^*} (e^{2x^*}+e^{2y^*}) }{\sqrt{a_0 a^*(x^*,y^*)}}\,\,\frac{e^{-\half (x^* +y^*)}}{ \sqrt{2\pi \Phi_{aa}(x^*,y^*,a^*(x^*,y^*))\,} }\,\frac{1}{\Psi''(x^*)} \,  \frac{1}{\bar{H}_K(x^*)\sinh \bar{H}_K(x^*)} \nn \,
\eq
When $K\in (2,K^*]$, $x^*$ and $\varphi$ simplify to $x^*(K)=\log(\half K)$ and $\varphi(k)=\half[\cosh^{-1}(\sqrt{1+\frac{2}{a_0^2}(k-\log 2)^2})]^2$.

\bs
For the special case $K=2e$, recall from \eqref{eq:J1Special} that
\bq
\mc{J}_1
&=& \,\frac{2}{\sqrt{a_0 \bar{a} }}\,\,\frac{1}{4 \pi \sqrt{\Phi_{aa}(x^*,y^*,\bar{a})} } \,\frac{\Gamma(\frac{1}{4})\, }{(\xi \bar{H})^{\frac{1}{4}} \, t^{\frac{3}{4}}} \, \frac{\bar{H}}{\sinh \bar{H}} \, e^{-\frac{\bar{H}^2}{2t}}[1+o(1)]
\eq
where $\bar{H}=\bar{H}_{K^*}(x^*)=\cosh^{-1}\sqrt{1+\frac{2}{a_0^2}}$, $\bar{a}=a^*(x^*,y^*)=\sqrt{1+\frac{2}{a_0^2}}$.  Using that
  \bq
 \label{eq:Awkward2}
\int_0^t  \frac{1}{u^{\frac{3}{4}}}\,e^{-k^2/2u} du&=&  \frac{\sqrt{k}}{2^{\frac{1}{4}}}\Gamma(-\frac{1}{4},\frac{k^2}{2t})  ~ \,\frac{2}{k^2}\,t^{\frac{5}{4}}\,e^{-\frac{k^2}{2 t}} \,  \quad  \quad  (k>0)
 \eq
we obtain
\bq
\mathbb{E}(S^{(1)}_t+S^{(2)}_t-K)^+ &=&  \frac{1}{\sqrt{a_0 \bar{a} }}\,\,\frac{1}{4 \pi \sqrt{\Phi_{aa}(x^*,y^*,\bar{a})} } \,\frac{\Gamma(\frac{1}{4})\, }{(\xi \bar{H})^{\frac{1}{4}} \, } \, \frac{1}{\bar{H} \sinh\bar{H}} \,t^{\frac{5}{4}} e^{-\frac{\bar{H}^2}{2t}}[1+o(1)] \nn
\eq
where $\bar{H}=\cosh^{-1}\sqrt{1+\frac{2}{a_0^2}}$, $\bar{a}=\sqrt{1+\frac{2}{a_0^2}}$ and $\xi=\frac{5}{12}\frac{1}{\sqrt{2a_0^2 +4}}$.
\end{itemize}
\end{proof}

\newpage


\newpage
\bs
\bs

\section{The general bi-variate SABR model: non-zero correlation}\bs

\subsection{Small-time asymptotics for basket call options}

We now consider a generalized version of the model in \eqref{eq:Model}:
\bq
\label{eq:Model2}
        \left\{
        \begin{array}{ll}
dX_t\,=\,\, -\half a_t^2 \sigma_x^2 dt + \sigma_x a_t dW^1_t \,, \\
dY_t \,\,=\,\, -\half a_t^2 \sigma_y^2  dt + \sigma_y a_t dW^2_t \,,\\
da_t \,\,\,= \,\, \al a_t dW^3_t \, \,
        \end{array}\
        \right.
\eq
where $dW^1_t dW^2_t=\rho_{xy} dt$, $dW^1_t dW^3_t=\rho_{xa}dt$, $dW^2_t dW^3_t=\rho_{yz}dt$ with $\sigma_x,\sigma_y,\al>0$,  $\rho_{xy}^2<1$, $\rho_{xa}^2<1$, $\rho_{ya}^2<1$ and $\rho_{xy}^2+\rho_{xa}^2+\rho_{ya}^2-2\rho_{xy}\rho_{xa}\rho_{ya}<1$, which ensures that the covariance matrix of the three Brownian motions is positive semi-definite, and we set $X_0=Y_0=0$ as before.  Throughout, we set $\bar{\rho}_{xy}=\sqrt{1-\rho_{xy}^2}$, $\bar{\rho}_{xa}=\sqrt{1-\rho_{xa}^2}$ and $\bar{\rho}_{ya}=\sqrt{1-\rho_{ya}^2}$.

\sk

\begin{rem}
A fully general model would be four-dimensional, with one volatility process for each asset, but the analysis for such a model is significantly messier.  The one drawback of our existing model is that both assets have the same vol-of-vol $\al$, which may not be unreasonable if both assets are in the same sector.
\end{rem}

\sk

The heat equation associated with \eqref{eq:Model2} is given by
\bq
\p_t u\,-\,\half a^2 (\sigma_x^2 u_x+\sigma_x^2 u_y) \,+\, \half a^2 (\sigma_x^2 u_{xx} +\sigma_y^2 u_{yy}+\al^2 u_{aa}  +2\rho_{xy}\sigma_x \sigma_y u_{xy}  + 2\rho_{xa}\sigma_x \al u_{xa} + 2\rho_{ya}\sigma_y \al   u_{ya})   \nn \,.
\eq
Now let $\tau=\al^2 t$, so $\p_t u=\al^2 \p_{\tau}u$.  Then this equation transforms to
\bq
\al^2 \p_{\tau} u  \,-\,\half a^2 (\sigma_x^2 u_x+\sigma_y^2 u_y) \,+\, \half a^2 (\sigma_x^2 u_{xx} +\sigma_y^2 u_{yy}+\al^2 u_{aa}  +2\rho_{xy}\sigma_x \sigma_y u_{xy}  + 2\rho_{xa}\sigma_x \al u_{xa} + 2\rho_{ya}\sigma_y \al   u_{ya})   \nn \,.
\eq
or equivalently
\bq
 \p_{\tau} u    \,-\,\half \frac{a^2}{\al^2}(\sigma_x^2 u_x+\sigma_y^2 u_y) \,+\, \half a^2 (\frac{\sigma_x^2}{\al^2} u_{xx} +\frac{\sigma_y^2}{\al^2} u_{yy}+  u_{aa}  +2\rho_{xy}\frac{\sigma_x \sigma_y}{\al^2} u_{xy}  + 2\rho_{xa}\frac{\sigma_x}{\al}  u_{xa} + 2\rho_{ya}\frac{\sigma_y}{\al}    u_{ya})  \nn \,.
\eq
If we now set $x'=\al x/\sigma_x,y'=\al y/\sigma_y$, then PDE further transforms to
\bq
\label{eq:Simplified}
 \p_{\tau} u  \,-\,\half \frac{a^2}{\al}(\sigma_x u_x+\sigma_y u_y)  \,+\, \half a^2 ( u_{x'x'} + u_{y'y'}+  u_{aa}  +2\rho_{xy} u_{x'y'}  + 2\rho_{xa}  u_{x'a} + 2\rho_{ya}u_{y' a})   \,.
\eq
Then the diffusion matrix for \eqref{eq:Simplified} is given by $(a_{ij})=a^2\Sigma \Sigma^T$, where
\bq
\Sigma  &=&\begin{bmatrix}
 \beta & \frac{\gamma}{\bar{\rho}_{ya}} &   \rho_{xa} \nn \\
0 & \bar{\rho}_{ya} & {\rho}_{ya} \nn \\
0 & 0 & 1 \nn
        \end{bmatrix}
\eq
\nind and $\beta=\sqrt{\bar{\rho}_{xa}^2-\gamma^2/\bar{\rho}_{ya}^2}$, $\gamma=\rho_{xy}-\rho_{xa}\rho_{ya}$, and this matrix now only involves correlations.  Moreover,
\bq
\Sigma^{-1}  &=&\begin{bmatrix}
\frac{1}{\beta}    & -\frac{\gamma}{\bar{\rho}_{ya}^2 \beta } &   \frac{\xi}{\bar{\rho}_{ya}^2\beta } \\
         0& \frac{1}{\bar{\rho}_{ya} } & -\frac{\rho_{ya}}{\bar{\rho}_{ya}} \\
           0 & 0 & 1
        \end{bmatrix}
\eq
\nind where $\xi=\rho_{xy}\rho_{ya}-\rho_{xa}$.   The process associated with \eqref{eq:Simplified} now takes the form
\bq\label{38}
        \left\{
        \begin{array}{ll}
dX'_t\,=\,\, -\half a_t^2 \frac{\sigma_x}{\al} dt +  a_t (\beta dB^1_t + \frac{\gamma}{\bar{\rho}_{ya}}dB^2_t  + \rho_{xa}  dB^3_t) \,, \\
dY'_t \,=\,\, -\half a_t^2 \frac{\sigma_y}{\al}   dt +  a_t (\bar{\rho}_{ya} dB^2_t + \rho_{ya}dB^3_t)   \,,\\
da_t \,\,\,= \,\,  a_t dB^3_t \,\, \,
        \end{array}\
        \right.
\eq
\nind with $(X'_0,Y'_0,a_0)=(\frac{\al}{\sigma_x}X_0,\frac{\al}{\sigma_y}Y_0,a_0)=(0,0,a_0)$, where $B^1,B^2,B^3$ are independent Brownians.

\bs

We define the following two quantities which will be needed in the theorem which follows.
\bq
\Phi(a)&=&\half d(x,y,a)^2 \nn \\
\Psi(x)&=&\frac{d(x,y_K(x),a^*(x,y))^2}{2\al^2} \nn \\
a^*(x,y)&=& \mathrm{argmin}_{a>0}d(x,y,a) =[\frac{y^2 \al^2 \beta^2 (\bar{\rho}_{ya}^2 + \beta^2 \gamma^2) \sigma_x^2-2xy\al^2\beta^2\gamma\bar{\rho}_{ya}^2 \sigma_x \sigma_y + \bar{\rho}_{ya}^4(x^2 \al^2+a_0^2 \beta^2 \sigma_x^2) \sigma_y^2}{(\xi^2 +\beta^2 \bar{\rho}_{ya}^2)  \sigma_x^2 \sigma_y^2}]^{\half} \nn \\
\hat{A}(x,y,a) &=&-\half (\frac{\sigma_x}{\beta \al}-\frac{\gamma \sigma_y}{\bar{\rho}_{ya}^2\beta \al})x -\frac{\sigma_y}{2\al \bar{\rho}_{ya}} y  \,+\, \half \log \frac{a}{a_0} \, \nn \,.
\eq

\sk

\begin{thm} \label{thm:GeneralCase} For the general correlated model in \eqref{eq:Model2} with $X_0=0,Y_0=0$, we have the following small-time behaviour for a basket call option for $K>2$:
\bq
\mathbb{E}(S^{(1)}_t+S^{(2)}_t-K)^+ &=&
 \frac{\bar{\psi}(k)}{\sqrt{2\pi}}\, t^{\frac{3}{2}}e^{-\frac{\bar{\Lambda}(k)^2}{2 t}}[1+o(1)] \quad \quad \,\, (\text{if} \,\,\,\, \Psi''(x_j^*) > 0  \quad \forall j=1...N)
 \label{eq:BasketCallApproxCorr}
\eq
where $k=\log K$, $\bar{H}_K(x,y)=d(x,y,a^*(x,y))$, $\Lambda(k)=\min_x \bar{H}_K(x,y_K(x))$, $\bar{\Lambda}(k)=\Lambda(k)/\al^2$, $x^*_j=\mathrm{argmin}_x \bar{H}_K(x,y_K(x))$ for $j=1..N$ where $N<\infty$ is the number of global minimizers of $\bar{H}_K(x,y_K(x))$, $y_K(x)=\log(K-e^x)$ as before, $y^*_j=y_K(x^*_j)$ and
\bq
d(x,y,a) &=& \rho(\Sigma^{-1}(0,0,a_0)^{\mathrm{T}},\Sigma^{-1}(\frac{\al x}{\sigma_x},\frac{\al y}{\sigma_y},a)^{\mathrm{T}}) \, ,\nn \\
\psi(k) &=&  \sum_{j=1}^N (a^*_j)^2\,\frac{e^{x^*_j} (\sigma_x^2  e^{2x^*_j}+\sigma_y^2e^{2y^*_j})}{e^{x^*_j+y^*_j}}  \, \frac{\sqrt{g}(\Sigma^{-1}(\frac{\al x^*_j}{\sigma_x},\frac{\al y^*_j}{\sigma_y},a^*_j)^{\mathrm{T}})\,\, \chi(x^*_j,y^*_j,a^*_j)}{\sqrt{2 \pi \al  } \sqrt{\Phi_{aa}(a^*_j)\Psi''(x^*_j)}}\frac{{\al^4}/{(\sigma_x \sigma_y)}}{d(x^*_j,y^*_j,a^*_j) \sinh d(x^*_j,y^*_j,a^*_j)}\frac{1}{\mathrm{det}\Sigma} \,,\nn
\eq
$\chi(x,y,a)=e^{\hat{A}(\Sigma^{-1}(0,0,a_0)^{\mathrm{T}}, \Sigma^{-1}(\frac{\al x}{\sigma_x},\frac{\al y}{\sigma_y},a)^{\mathrm{T}})}$, $\bar{\psi}(k)=\sqrt{2\pi}\,\psi(k)$ and $\rho(.,.,.)$ is defined as in \eqref{eq:dist}.
\end{thm}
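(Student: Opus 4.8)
The plan is to reduce the correlated model \eqref{eq:Model2} to the uncorrelated hyperbolic setting of Theorem \ref{thm:SABRrhozero} via the deterministic transformations already recorded in \eqref{eq:Model2}--\eqref{38}, and then to rerun the same sequence of steps used to prove Theorem \ref{thm:SABRrhozero}. Step one is the generalized It\^o (Tanaka--Meyer) identity as in \eqref{eq:TanakaMeyer}: since $S^{(1)}_0+S^{(2)}_0=2<K$ the term $\mathbb{E}(2S_0-K)^+$ vanishes, and the call price equals $\half\int_0^t\mathbb{E}\big(a_u^2\,\delta(S^{(1)}_u+S^{(2)}_u-K)\,[\sigma_x^2(S^{(1)}_u)^2+2\rho_{xy}\sigma_x\sigma_y S^{(1)}_u S^{(2)}_u+\sigma_y^2(S^{(2)}_u)^2]\big)\,du$, the bracket being the local quadratic variation of $S^{(1)}+S^{(2)}$. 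Hence everything reduces to the small-time behaviour of the Lebesgue transition density $\hat p(x,y,a,t)$ of $(X_t,Y_t,a_t)$, and in particular of $\mathbb{E}(a_t^2\,\delta(X_t-x,Y_t-y))=\int_0^\infty a^2\hat p(x,y,a,t)\,da$, exactly as in the proof of Theorem \ref{thm:SABRrhozero}.

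Step two is the change of variables. Applying the time-change $\tau=\al^2 t$, the space-rescaling $x'=\al x/\sigma_x,\,y'=\al y/\sigma_y$ and then the linear map $\Sigma^{-1}$ turns the generator of \eqref{38} into $\half a^2(\p_{u}^2+\p_{v}^2+\p_a^2)$ plus a smooth first-order term $\mc A$, i.e. (up to the constant time factor) the Laplace--Beltrami operator of $\mathbb{H}^3$ plus $\mc A$, exactly as in Section 3; endowing the transformed state space with the metric $g_{ij}=(a_{ij})^{-1}$ (the Poincar\'e metric in those coordinates) makes the geodesic distance between $\Sigma^{-1}(0,0,a_0)^{\mathrm{T}}$ and $\Sigma^{-1}(\al x/\sigma_x,\al y/\sigma_y,a)^{\mathrm{T}}$ equal to $d(x,y,a)$, and the line integral $\int_0^1\langle\mc A,\dot\gamma\rangle\,ds$ produces $\hat A(x,y,a)$, whose computation mirrors \eqref{eq:mathcalA} (the first two components contribute linearly in the displacement $(x,y)$, and the $a$-component contributes $\half\log(a/a_0)$ via $\sqrt g$). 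Combining Theorem \ref{thm:Bellaiche}, the exact $\mathbb{H}^3$ kernel \eqref{H3ExactSolN} and the Jacobians of $x'=\al x/\sigma_x$, $y'=\al y/\sigma_y$ and $\Sigma^{-1}$ (which produce the $\al^4/(\sigma_x\sigma_y)$ and $1/\det\Sigma$ factors) together with the Riemannian volume $\sqrt g$ yields the small-time expansion of $\hat p$, uniformly over compact $(x,y,a)$-sets; Laplace's method in $a$ (minimizer $a^*(x,y)$, second-order term $\Phi_{aa}(a^*)$), the convolution over $\{e^x+e^y=K\}$ parametrised by $y=y_K(x)$, and Laplace's method in $x$ (minimizers $x^*_j$, $j=1,\dots,N$, with second-order terms $\Psi''(x^*_j)$) then reproduce the claimed $\psi(k)$, and the asymptotic relation \eqref{eq:Awkward} carries out the remaining $\int_0^t\cdots du$ and produces the $t^{3/2}e^{-\bar\Lambda(k)^2/(2t)}$ prefactor, where the rate $\bar\Lambda(k)^2/(2t)=\min_x\Psi(x)/t$ after tracking the powers of $\al$ inherited from $\tau=\al^2 t$.

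Step three is the tail control, the only part going beyond bookkeeping: one needs the analogue of the global bound \eqref{eq:GlobalBound}, obtained by factoring $\hat p=e^{\hat A}\,q$ with $q$ solving a heat equation whose potential is bounded (as in \eqref{eq:qPDE}--\eqref{eq:GlobalBound}), to dominate the contributions of $a\in(0,\e)\cup(1/\e,\infty)$ and of large $|x|$ or $|y_K(x)|$. For the large-$|x|$ tail one uses that $X_t,Y_t$ are conditionally Gaussian given $T_t=\int_0^t a_s^2\,ds$, then Cauchy--Schwarz with the Appendix \ref{proofexp} moment bounds on $a_t$ (now with vol-of-vol $\al$) and on $T_t^{-1}$, and finally the put--call-parity / Theorem 4.6 of \cite{AFLZ14} argument of \eqref{q2}--\eqref{q3} to get $\pr(X_t\le -R)=O(e^{-c/t}t^{3/2})$ with $c$ strictly larger than $2\min_x\Psi(x)$ for $R$ large, which makes these tails $o$ of the main term. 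The principal obstacle throughout is the change-of-variables bookkeeping — keeping track of the Jacobians, $\sqrt g$, and the explicit forms of $d(x,y,a)$, $a^*(x,y)$ and $\hat A$ — together with re-deriving the tail estimates in the transformed coordinates; and, as in \cite{BFL14}, we do not attempt here the full classification of when $\Psi''(x^*_j)>0$ (the analogue of Lemma \ref{lem:BFLError}), which is why it appears as a hypothesis in \eqref{eq:BasketCallApproxCorr}.
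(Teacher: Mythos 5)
Your proposal follows essentially the same route as the paper's proof: the time-change $\tau=\al^2 t$, the rescaling $x'=\al x/\sigma_x$, $y'=\al y/\sigma_y$, the linear map $\Sigma^{-1}$ reducing the generator to the $\mathbb{H}^3$ Laplace--Beltrami operator plus a first-order term $\mc{A}$, the Bellaiche expansion combined with the exact hyperbolic kernel, Laplace's method first in $a$ and then (after the convolution along $e^x+e^y=K$) in $x$ over the finitely many minimizers $x_j^*$, the Tanaka identity plus \eqref{eq:Awkward} for the outer time integral, and tail control via conditional Gaussianity, Cauchy--Schwarz with the Appendix \ref{proofexp} bound and the put--call-parity argument of \eqref{q2}--\eqref{q3}; this is exactly the paper's strategy. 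The one substantive divergence is your quadratic-variation bracket in the Tanaka step: you write $\sigma_x^2(S^{(1)})^2+2\rho_{xy}\sigma_x\sigma_y S^{(1)}S^{(2)}+\sigma_y^2(S^{(2)})^2$, which is indeed the correct quadratic variation of $S^{(1)}+S^{(2)}$ under \eqref{eq:Model2}, whereas the paper's displayed identity carries $(\sigma_x^2+\sigma_y^2)[(S^{(1)})^2+(S^{(2)})^2]$ and the theorem's $\psi(k)$ carries $\sigma_x^2 e^{2x^*_j}+\sigma_y^2 e^{2y^*_j}$. Pushing your bracket through Laplace's method replaces $\sigma_x^2 e^{2x^*_j}+\sigma_y^2 e^{2y^*_j}$ by $\sigma_x^2 e^{2x^*_j}+2\rho_{xy}\sigma_x\sigma_y e^{x^*_j+y^*_j}+\sigma_y^2 e^{2y^*_j}$, so for $\rho_{xy}\ne 0$ you arrive at a constant differing from the stated $\psi(k)$ by the cross term; you should either present this explicitly as a correction to the statement or reconcile the two, since as written your argument does not terminate at the literal formula in the theorem (the two agree only when $\rho_{xy}=0$). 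A smaller point you wave at with ``after tracking the powers of $\al$'': the density's exponent is $e^{-d^2/(2\al^2 t)}$, so the rate you obtain is $\Lambda(k)^2/(2\al^2 t)=\min_x\Psi(x)/t$, which equals $\bar{\Lambda}(k)^2/(2t)$ only under the normalization $\bar{\Lambda}=\Lambda/\al$ rather than $\Lambda/\al^2$; do that bookkeeping explicitly rather than leaving it implicit.
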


\sk

\sk

\nind \begin{proof} (of Theorem \ref{thm:GeneralCase}).
Let
\bq
\label{eq:TransformedSDEs}
\begin{bmatrix}
d\hat{X}_t    \\
         d\hat{Y}_t \\
           d\hat{a}_t
        \end{bmatrix}  \,\,=\,\, \Sigma^{-1}\begin{bmatrix}
dX'_t    \\
         dY'_t \\
           da_t
        \end{bmatrix} \,\,=\,\,
a_t^2 \Sigma^{-1}\begin{bmatrix}
 -\half  \sigma_x/\al    \\
      -\half  \sigma_y/\al    \\
           0
        \end{bmatrix}dt \,+\, a_t\begin{bmatrix}
dB^1_t \\
dB^2_t    \\
dB^3_t
         \, \end{bmatrix}
\,\,=\,\, a_t^2 \begin{bmatrix}
\frac{1}{\beta}    & -\frac{\gamma}{\bar{\rho}_{ya}^2 \beta } &   \frac{\xi}{\bar{\rho}_{ya}^2\beta } \\
         0& \frac{1}{\bar{\rho}_{ya} } & -\frac{\rho_{ya}}{\bar{\rho}_{ya}} \\
           0 & 0 & 1
        \end{bmatrix} \begin{bmatrix}
 -\half  \frac{\sigma_x}{\al}    \\
      -\half  \frac{\sigma_y}{\al}    \\
           0
        \end{bmatrix}dt +a_t\begin{bmatrix}
dB^1_t \\
dB^2_t    \\
dB^3_t
         \, \end{bmatrix}\,\,\,\,\,\,\,\,
 \eq
 with $(\hat{X}_0,\hat{Y}_0,\hat{a}_0)^T=\Sigma^{-1}(0,0,a_0)^T$.  Then from Theorem \ref{thm:Bellaiche}, we know that the joint density of $(\hat{X}_t,\hat{Y}_t,\hat{a}_t)$ behaves like
\bq
 \hat{p}(\hat{x},\hat{y},\hat{a},t)  &=& \sqrt{g}(\hat{x},\hat{y},\hat{a})\,\, e^{\hat{A}(\hat{x},\hat{y},\hat{a})}\frac{1}{(2 \pi t)^{\frac{3}{2}}}\frac{\rho}{\sinh \rho} \, e^{-\frac{\rho^2}{2t}}[1+o(1)] \quad \quad \quad \quad (t \to 0) \nn \,
\eq
where $\rho=\rho(0,0,\hat{a}_0;\hat{x},\hat{y},\hat{a})$ and $\hat{A}(x,y,a)=\int_0^1 \langle \mathcal{A},\dot{\gamma}(s) \rangle ds$, where $\gamma$ is the unique distance-minimizing geodesic joining $(0,0,\hat{a}_0)$ to $(\hat{x},\hat{y},\hat{a})$ under the metric $ds^2=\frac{1}{a^2}(dx^2+dy^2+da^2)$, and $\mc{A}$ has to be computed as before from the drift coefficient in \eqref{eq:TransformedSDEs} using the formula $\mathcal{A}^i=b^i-\frac{1}{2}\sum_j \frac{1}{\sqrt{g}}\partial_j (\sqrt{g}\,g^{ij})$; in this case we find that $\mc{A}=(a^2(-\frac{\sigma_x}{2 \beta \al} + \frac{\gamma \sigma_y}{2 \bar{\rho}_{ya}^2\beta \al}),-a^2\frac{\sigma_y}{2\al \bar{\rho}_{ya}},\half a)$ and we have
\bq
\hat{A}(\hat{x},\hat{y},a) ~  \int_0^1 \langle \mathcal{A},\dot{\gamma}\rangle \,dt &=& \int_{\gamma} [\frac{1}{a^2}\mathcal{A}^1  \frac{d\hat{x}}{dt}  \,+ \, \frac{1}{a^2}\mathcal{A}^2 \frac{d\hat{y}}{dt}  + \frac{1}{a^2}\mathcal{A}^3 \frac{da}{dt}] dt \nn \\
&=& \int_{\gamma} [(-\frac{\sigma_x}{2 \beta \al} + \frac{\gamma \sigma_y}{2 \bar{\rho}_{ya}^2\beta \al}) \frac{d\hat{x}}{dt}  \,+\, (-\frac{\sigma_y}{2\al \bar{\rho}_{ya}}) \frac{d\hat{y}}{dt}  + \frac{1}{2 a} \frac{da}{dt}] dt \nn \\
&=& -\half (\frac{\sigma_x}{\beta \al} - \frac{\gamma \sigma_y}{\bar{\rho}_{ya}^2\beta \al})\hat{x} -\frac{\sigma_y}{2\al \bar{\rho}_{ya}} \hat{y}  \,+\, \half \log \frac{a}{a_0}  \,. \nn
\eq
Then the density of $(X'_t,Y'_t,a_t)$ satisfies
\bq
 \hat{p}(x',y',a,t)  &=& \sqrt{g}(\hat{x},\hat{y},\hat{a})\,\, e^{\hat{A}(\hat{x},\hat{y},\hat{a})}\frac{1}{(2 \pi t)^{\frac{3}{2}}}\frac{\rho}{\sinh \rho} \, e^{-\frac{\rho^2}{2t}} \frac{1}{\mathrm{det}\Sigma}\,[1+o(1)] \quad \quad \quad \quad (t \to 0) \nn \,
\eq
where now $(\hat{x},\hat{y},\hat{a})^{\mathrm{T}}= \Sigma^{-1}(x',y',a)^{\mathrm{T}}$.  Transforming back to the original variables, we now obtain
\bq
 \hat{p}(x,y,a,t)  &=& \sqrt{g}(\hat{x},\hat{y},\hat{a})\,\, e^{\hat{A}(\hat{x},\hat{y},\hat{a})}\frac{1}{(2 \pi \al^2 t)^{\frac{3}{2}}}\frac{\rho}{\sinh \rho} \, e^{-\frac{\rho^2}{2\al^2 t}} \frac{1}{\mathrm{det}\Sigma}\,\frac{\al^2}{\sigma_x \sigma_y} \,[1+o(1)]  \nn \\
&=& \sqrt{g}\,(\Sigma^{-1}(\frac{\al x}{\sigma_x},\frac{\al y}{\sigma_y},a)^{\mathrm{T}})\,\, e^{\hat{A}(\Sigma^{-1}(0,0,a_0)^{\mathrm{T}}, \Sigma^{-1}(\frac{\al x}{\sigma_x},\frac{\al y}{\sigma_y},a)^{\mathrm{T}})}\frac{1}{(2 \pi \al^2 t)^{\frac{3}{2}}}\frac{d(x,y,a)}{\sinh d(x,y,a)} \, e^{-\frac{d(x,y,a)^2}{2\al^2 t}} \frac{1}{\mathrm{det}\Sigma}\,\frac{\al^2}{\sigma_x \sigma_y} \,[1+o(1)]  \nn \,
\eq
\nind as $t \to 0$, where $d(x,y,a)$ is defined in the statement of the theorem and we are using that $(\hat{x},\hat{y},\hat{a})^{\mathrm{T}}= \Sigma^{-1}(\frac{\al x}{\sigma_x},\frac{\al y}{\sigma_y},a)^{\mathrm{T}}$.  For for future reference we define
\bq
\label{eq:phat1}
\hat{p}^1(x_0,y_0,a_0;x,y,a,t) &:=&\sqrt{g}\,(\Sigma^{-1}(\frac{\al x}{\sigma_x},\frac{\al y}{\sigma_y},a)^{\mathrm{T}})\,\, e^{\hat{A}(\Sigma^{-1}(0,0,a_0)^{\mathrm{T}}, \Sigma^{-1}(\frac{\al x}{\sigma_x},\frac{\al y}{\sigma_y},a)^{\mathrm{T}})}\frac{1}{(2 \pi \al^2 t)^{\frac{3}{2}}}\frac{d(x,y,a)}{\sinh d(x,y,a)} \, e^{-\frac{d(x,y,a)^2}{2\al^2 t}} \frac{1}{\mathrm{det}\Sigma}\nn \\
\eq
to be the leading order approximation here.

 From a formal application of Laplace's method (see below for discussion on justifying this rigorously), we have
\bq
\mathbb{E}(a_t^2 \delta(X_t-x,Y_t-y))
&=& \int_{a=0}^{\infty} a^2  \hat{p}(x,y,a,t) da \nn \\
&=& (a^*)^2 \sqrt{g}\,(\Sigma^{-1}(\frac{\al x}{\sigma_x},\frac{\al y}{\sigma_y},a^*)^{\mathrm{T}})\, \frac{\chi(x,y,a^*)}{2 \pi \al  t \sqrt{\Phi_{aa}(a^*)}}\frac{d(x,y,a^*)}{\sinh d(x,y,a^*)} \, e^{-\frac{d(x,y,a^*)^2}{2\al^2 t}} \frac{1}{\mathrm{det}\Sigma}\,\frac{\al^2}{\sigma_x \sigma_y} \,[1+o(1)] \, \nn
\eq
\nind where $a^*=a^*(x,y)$ is the unique minimizer of $d(x,y,a)$.  Then applying a convolution as before, we see that
\bq
&& \mathbb{E}(a_t^2 \delta(S^{(1)}_t+S^{(2)}_t-K) [(S^{(1)}_t)^2+(S^{(2)}_t)^2]) \nn \\
&\sim& \int_{e^x=0}^{K}  \frac{e^{2x}+e^{2y_K}}{e^{x+y_K}} (a^*)^2 \sqrt{g}\,(\Sigma^{-1}(\frac{\al x}{\sigma_x},\frac{\al y}{\sigma_y},a^*)^{\mathrm{T}})\,\,  \ \frac{\chi(x,y,a)}{2 \pi \al  t \sqrt{\Phi_{aa}(a^*)}}\frac{d(x,y,a^*)}{\sinh d(x,y,a^*)} \, e^{-\frac{d(x,y,a^*)^2}{2\al^2 t}} \frac{1}{\mathrm{det}\Sigma}\,\frac{\al^2}{\sigma_x \sigma_y} \,\, d(e^x) \nn \\
&\sim& \int_{-\infty}^{k}\frac{[e^{2x}+e^{2y_K(x)}]}{e^{y_K(x)}} \,(a^*)^2 \sqrt{g}\,(\Sigma^{-1}(\frac{\al x}{\sigma_x},\frac{\al y_K(x)}{\sigma_y},a^*)^{\mathrm{T}}) \frac{\chi(x,y_K(x),a)}{2 \pi \al  t \sqrt{\Phi_{aa}(a^*)}}\frac{d(x,y_K(x),a^*)}{\sinh d(x,y_K(x),a^*)} e^{-\frac{d(x,y_K(x),a^*)^2}{2\al^2 t}}\frac{\al^2 dx}{\sigma_x \sigma_y \mathrm{det}\Sigma}  \nn
\eq
where $y_K=y_K(x)=\log(K-e^x)$.  Since $\bar{H}_K(x, y_K(x))$ is real analytic in $x\in(-\infty,\log K)$, there can only exists finitely many roots to the equation $\frac{d}{dx}\bar{H}_K(x, y_K(x))=0$ over this domain. It follows that there can only be a finite number of minimizers $x_j^*$'s such that $\bar{H}_K(x_j^*, y_K(x_j^*))=\Lambda(k)=\min_x\bar{H}_K(x, y_K(x))$.  Applying Laplace's method again, we can now re-write this expression as
\bq
\sim& \sum_{j=1}^N\frac{e^{x^*_j} [e^{2x^*_j}+e^{2y^*_j}]}{e^{x^*_j+y^*_j}} \,(a^*_j)^2 \sqrt{g}\,(\Sigma^{-1}(\frac{\al x^*_j}{\sigma_x},\frac{\al y^*_j}{\sigma_y},a^*_j)^{\mathrm{T}}) \frac{\chi(x^*_j,y^*_j,a^*_j)}{\sqrt{2 \pi \al  t} \sqrt{\Phi_{aa}(a^*_j)\Psi''(x^*_j)}}\frac{d(x^*_j,y^*_j,a^*_j)}{\sinh d(x^*_j,y^*_j,a^*_j)} \, e^{-\frac{d(x^*_j,y^*_j,a^*_j)^2}{2\al^2 t}} \frac{\al^2}{\sigma_x \sigma_y \mathrm{det}\Sigma}
\label{eq:Wazaa} \,
\eq
where $a_j^*$ now refers to $a^*(x^*_j,y^*_j)$.  Similar to before, we know from the generalized It\^{o} formula that
\bq
\label{eq:TanakaMeyer}
\mathbb{E}(S^{(1)}_t+S^{(2)}_t-K)^{+}\,-\,\mathbb{E}(2S_0-K)^{+}
&=& \frac{1}{2} \int_0^t  \mathbb{E}( (\sigma_x^2 +\sigma_y^2) a_u^2 \,\delta(S^{(1)}_u+S^{(2)}_u-K) [(S^{(1)}_u)^2+(S^{(2)}_u)^2]) du
\eq

Combining this with \eqref{eq:Awkward}, we find that the asymptotic basket call price is given by
 \bq
\sum_{j=1}^N \frac{e^{x^*_j} [\sigma_x^2 e^{2x^*_j}+\sigma_y^2 e^{2y^*_j}]}{e^{x^*_j+y^*_j}} \,(a^*_j)^2 \sqrt{g}\,(\Sigma^{-1}(\frac{\al x^*_j}{\sigma_x},\frac{\al y^*_j}{\sigma_y},a^*_j)^{\mathrm{T}}) \frac{\chi(x^*_j,y^*_j,a^*_j)}{\sqrt{2 \pi \al  } \sqrt{\Phi_{aa}(a^*_j)\Psi''(x^*_j)}}\frac{t^{\frac{3}{2}}\,e^{-\frac{d(x^*_j,y^*_j,a^*_j)^2 }{2\al^2 t}}\al^4/(\sigma_x \sigma_y) }{d(x^*_j,y^*_j,a^*_j) \sinh d(x^*_j,y^*_j,a^*_j)} \,  \frac{1}{\mathrm{det}\Sigma}\,\,[1+o(1)] \,   \,.\nn
\eq

The map from $(\hat{x},\hat{y},\hat{a})$ to $(x,y,a)$ is an invertible linear mapping, which maps  compact domains to compact domains.  Thus, the tail integral (an integer over the complement of a compact set) in the convolution of $\hat{p}_t(x,y_K(x),a^*(x,y_K(x)))$ in $x$ also corresponds to a tail integral of relevant joint density in $\hat{x}$, which can be controlled using similar arguments as in the proof of Theorem \ref{thm:SABRrhozero}.

In particular, from \eqref{38} we know that, conditional on the natural filtration generated by $(a_t)_{t\ge0}$, $\mathcal{F}_t^a$, we have
\bq
X_t'&=&-\half\frac{\sigma_x}{\al}T_t+\int_0^ta_s(\beta dB_s^1+\frac{\gamma}{\bar{\rho}_{ya}}dB_s^2)+\rho_{xa}(a_t-a_0),\nn\\
Y_t'&=&-\half\frac{\sigma_y}{\al}T_t+\bar{\rho}_{ya}\int_0^ta_sdB_s^2+\rho_{ya}(a_t-a_0).\nn
\eq
Recall that $\beta=\sqrt{\bar{\rho}_{xa}^2-\gamma^2/\bar{\rho}_{ya}^2}$, $\gamma=\rho_{xy}-\rho_{xa}\rho_{ya}$ so $\beta^2+\frac{\gamma^2}{\bar{\rho}^2_{ya}}=\bar{\rho}_{xa}^2$.
It follows that
$(X'_t|\mathcal{F}_t^a)\sim N(-\half\frac{\sigma_x}{\al}T_t+\rho_{xa}(a_t-a_0), \bar{\rho}_{xa}^2T_t)$,  $(Y'_t|\mathcal{F}_t^a)\sim N(-\half\frac{\sigma_y}{\al} T_t+\rho_{ya}(a_t-a_0),\bar{\rho}_{ya}^2T_t)$,  $X'_t$ and $Y'_t$ given $\mathcal{F}_t^a$ are correlated normal with correlation $\frac{\gamma}{\bar{\rho}_{xa}\bar{\rho}_{ya}}$. Hence, $(X_t|\mathcal{F}_t^a)\sim N(-\half \frac{\sigma_x^2}{\al^2}T_t+\rho_{xa}\frac{\sigma_x}{\al}(a_t-a_0),\frac{\sigma_x^2}{\al^2}\bar{\rho}_{xa}^2T_t)$ and $(Y_t|\mathcal{F}_t^a)\sim N(-\half\frac{\sigma_y^2}{\al^2} T_t+\rho_{ya}\frac{\sigma_y}{\al}(a_t-a_0),\bar{\rho}_{ya}^2\frac{\sigma_y^2}{\al^2}T_t)$,  $X_t$ and $Y_t$ are correlated with correlation $\hat{\rho}:=\frac{\gamma}{\bar{\rho}_{xa}\bar{\rho}_{ya}}$. It follows that, for all $(x,y)\in\mathbb{R}^2$, we have
\bq\ex(\delta(X_t-x,Y_t-y)|\mathcal{F}_t^a)&=&\ex(\delta(X_t-x)|\mathcal{F}_t^a)\cdot\ex(\delta(Y_t-y)|X_t=x,\mathcal{F}_t^a)\nn\\
&\le&\ex(\delta(X_t-x)|\mathcal{F}_t^a)\cdot\frac{1}{\frac{\sigma_y}{\al}\bar{\rho}_{ya}\sqrt{2\pi T_t}\sqrt{1-\hat{\rho}^2}}\nn
\eq
and in the second line we have used that $\mathrm{Var}(Y_t|X_t,\mathcal{F}_t^a)=(1-\hat{\rho}^2)\mathrm{Var}(Y_t|\mathcal{F}_t^a)$ by standard results on conditional Normal distributions. Similar to \eqref{eq:29_}, we then have
\bq
0&\le&\int_{-\infty}^{-R}\frac{[e^{2x}+e^{2y_K(x)}]}{e^{y_K(x)}}\mathbb{E}(a_t^2\delta(X_t-x,Y_t-y_K(x))dx\nn\\
&\le &\frac{e^{-2R}+K^2}{K-e^{-R}}\int_{-\infty}^{-R}\ex(a_t^2\delta(X_t-x,Y_t-y_K(x)))dx\nn\\
&\le &\frac{e^{-2R}+K^2}{K-e^{-R}}  \frac{1}{\frac{\sigma_y}{\al}\bar{\rho}_{ya}\sqrt{2\pi}\sqrt{1-\hat{\rho}^2}}\int_{-\infty}^{-R}\ex(a_t^2T_t^{-\half}\delta(X_t-x))dx\nn\\
&= &\frac{e^{-2R}+K^2}{K-e^{-R}}\frac{1}{\frac{\sigma_y}{\al}\bar{\rho}_{ya}\sqrt{2\pi}\sqrt{1-\hat{\rho}^2}}\,\ex(a_t^2T_t^{-\half}1_{\{X_t\le -R\}}) \nn
\eq
and we then proceed as in \eqref{eq:CSIneq}.
\nind \end{proof}

\sk



\subsection{Implied volatility}

\sk
\sk
We define the implied volatility of a basket call option with strike $K$ and maturity $t$ under the model in \eqref{eq:Model2} as the unique solution $\hat{\sigma}_t(k)$ to:
\bq
\mathbb{E}(S^{(1)}_t+S^{(2)}_t-K)^+ &=& C^{\mathrm{BS}}(2,K,\hat{\sigma}_t,t) \nn
\eq
where  $C^{\mathrm{BS}}(S,K,\hat{\sigma}_t,t)$ is the usual Black-Scholes call option pricing formula with zero interest rates.  It will be convenient to re-write this equation in a normalized form as
\bq
\half \mathbb{E}(S^{(1)}_t+S^{(2)}_t-K)^+ &=& C^{\mathrm{BS}}(1,e^{x_1},\hat{\sigma}_t(x_1),t) \nn
\eq
where $e^{x_1}=\half K$, and we now show the dependence of $\hat{\sigma}_t$ on $x_1$ explicitly.

\sk
\sk

\begin{cor}
For the model defined above, let $\hat{\sigma}_t(x_1)$ denote the implied volatility at maturity $t$ for strike $K=2e^{x_1}=e^k$, with $K > 2$.
Then we have the following asymptotic behaviour for $\hat{\sigma}_t(x_1)$:
\bq \label{eq:ImpliedVolExpansion}
\hat{\sigma}^2_t(x_1) &=& \hat{\sigma}_0(x_1)^2  +a(x_1) t\, \,+ \,o(t)\,,
\eq
\nind as $t \to 0$, where
\bq \label{eq:sigmaa}
\hat{\sigma}_0(x_1)\,=\, \frac{|x_1|}{\sqrt{2 \bar{\Lambda}(k)}}\,\quad ,\quad
a(x_1) \,=\, \frac{2\hat{\sigma}_0^4(x_1)}{x_1^2}\log\frac{\half \bar{\psi}(k)}{A_{\mathrm{BS}}(x,\hat{\sigma}_0(x_1))} \,.
\eq
\end{cor}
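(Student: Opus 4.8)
The plan is to read off the implied volatility by inverting the basket-call asymptotics of Theorem~\ref{thm:GeneralCase} against the classical small-time expansion of the out-of-the-money Black--Scholes call (in the spirit of the implied-volatility results of \cite{AFLZ14}). Since $K>2$ we have $x_1=\log(\half K)>0$, so we stay in the OTM regime throughout and $C^{\mathrm{BS}}(1,e^{x_1},0^+,t)=(1-e^{x_1})^+=0=\ex(2S_0-K)^+$. The benchmark I would first record is: for $x_1>0$ fixed and $\sigma$ ranging over a fixed compact subset of $(0,\infty)$, uniformly in $\sigma$,
\bq
C^{\mathrm{BS}}(1,e^{x_1},\sigma,t)&=&\frac{1}{\sqrt{2\pi}}\,A_{\mathrm{BS}}(x_1,\sigma)\,t^{\frac32}\,e^{-\frac{x_1^2}{2\sigma^2 t}}\,[1+o(1)]\qquad(t\to0),\qquad A_{\mathrm{BS}}(x_1,\sigma)=\frac{\sigma^3}{x_1^2}\,e^{x_1/2}\,,\nn
\eq
which follows by writing $C^{\mathrm{BS}}=\Phi(d_+)-e^{x_1}\Phi(d_-)$ with $d_\pm=-\frac{x_1}{\sigma\sqrt t}\pm\frac{\sigma\sqrt t}{2}$, using the Mills-ratio expansion $\bar\Phi(z)=\frac{e^{-z^2/2}}{z\sqrt{2\pi}}(1-z^{-2}+O(z^{-4}))$ as $z\to\infty$, and observing that the two leading $e^{\pm x_1/2}$ contributions cancel so that the surviving term is of order $t^{3/2}$ (the extra factor $e^{-\sigma^2 t/8}$ generated en route is $1+O(t)$ and is absorbed into the $[1+o(1)]$). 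Since the prefactor in Theorem~\ref{thm:GeneralCase} carries the same power $t^{3/2}$, the matching is clean at this order.

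\emph{Leading order.} Existence and uniqueness of $\hat\sigma_t(x_1)$ is standard (strict monotonicity of $C^{\mathrm{BS}}$ in $\sigma$ from $0$ to $1$). Taking $\log$ of the defining relation $\half\,\ex(S^{(1)}_t+S^{(2)}_t-K)^+=C^{\mathrm{BS}}(1,e^{x_1},\hat\sigma_t(x_1),t)$, inserting the expansion of Theorem~\ref{thm:GeneralCase} on the left and the benchmark above (with $\sigma=\hat\sigma_t(x_1)$) on the right, multiplying through by $t$ and letting $t\to0$ kills the $t\log t$ and $t\cdot O(1)$ contributions and leaves the equality of exponential rates, namely $\frac{x_1^2}{2\hat\sigma_0(x_1)^2}=\bar\Lambda(k)$ (the exponential rate of Theorem~\ref{thm:GeneralCase}, as written in \eqref{eq:sigmaa}), i.e. $\hat\sigma_0(x_1)=|x_1|/\sqrt{2\bar\Lambda(k)}$. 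In particular $\hat\sigma_t(x_1)\to\hat\sigma_0(x_1)>0$, which licenses the use of the locally uniform expansions near $\hat\sigma_0(x_1)$ in the next step.

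\emph{The $O(t)$ term.} Set $\e_t:=\hat\sigma_t^2(x_1)-\hat\sigma_0^2(x_1)\to0$. Substituting $\hat\sigma_t^2=\hat\sigma_0^2+\e_t$ into $\log C^{\mathrm{BS}}$ and Taylor-expanding gives $\frac{x_1^2}{2\hat\sigma_t^2 t}=\frac{x_1^2}{2\hat\sigma_0^2 t}-\frac{x_1^2\e_t}{2\hat\sigma_0^4 t}+O(\e_t^2/t)$ and $\log A_{\mathrm{BS}}(x_1,\hat\sigma_t)=\log A_{\mathrm{BS}}(x_1,\hat\sigma_0)+o(1)$; cancelling the common $\bar\Lambda(k)/t$ and $\frac32\log t$ terms, the $\log$ of the defining relation collapses to
\bq
\log\!\left(\half\,\bar\psi(k)\right)&=&\log A_{\mathrm{BS}}(x_1,\hat\sigma_0(x_1))\,+\,\frac{x_1^2\,\e_t}{2\hat\sigma_0^4(x_1)\,t}\,+\,O\!\left(\frac{\e_t^2}{t}\right)\,+\,o(1)\,.\nn
\eq
Because $\e_t\to0$ we have $\e_t^2/t=\e_t\cdot(\e_t/t)=o(\e_t/t)$, so this identity forces $\e_t/t=O(1)$, hence $\e_t^2/t=o(1)$, and then $\e_t/t\to\frac{2\hat\sigma_0^4(x_1)}{x_1^2}\log\frac{\frac12\bar\psi(k)}{A_{\mathrm{BS}}(x_1,\hat\sigma_0(x_1))}=a(x_1)$. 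Thus $\hat\sigma_t^2(x_1)=\hat\sigma_0^2(x_1)+a(x_1)\,t+o(t)$, which is \eqref{eq:ImpliedVolExpansion}--\eqref{eq:sigmaa}.

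\emph{Main obstacle.} The delicate part is purely the error bookkeeping: one must check that every smooth-in-$t$ correction --- the $e^{-\sigma^2 t/8}$ in $C^{\mathrm{BS}}$, and any analogous $e^{ct}$ hidden inside the $[1+o(1)]$ of Theorem~\ref{thm:GeneralCase} --- is of the form $1+O(t)$ and hence contributes nothing to $a(x_1)$, so that the entire linear-in-$t$ coefficient is generated solely by the first-order perturbation of the Gaussian exponent $\frac{x_1^2}{2\hat\sigma_t^2 t}$; and one needs the a priori convergence $\hat\sigma_t\to\hat\sigma_0$ in hand before linearizing in $\e_t$, together with the uniformity of the $[1+o(1)]$ in Theorem~\ref{thm:GeneralCase} near the minimizer(s) $x^*_j$ (already built into that theorem) and of the Black--Scholes expansion for $\sigma$ near $\hat\sigma_0(x_1)$. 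The case of several global minimizers ($N\ge2$) needs no extra work, since $\bar\psi(k)$ already incorporates the sum over all of them.
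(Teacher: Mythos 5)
Your proposal is correct and follows essentially the same route as the paper: equate the basket-call asymptotic of Theorem \ref{thm:GeneralCase} with the small-time out-of-the-money Black--Scholes expansion, take logarithms, and match the exponential rate to get $\hat{\sigma}_0$ and the $O(1)$ prefactor terms to get $a(x_1)$. The only difference is presentational — you carry out the transfer argument (a priori convergence $\hat{\sigma}_t\to\hat{\sigma}_0$, then the perturbation in $\e_t=\hat{\sigma}_t^2-\hat{\sigma}_0^2$) explicitly, whereas the paper delegates this step to Proposition 3.4 and Section 7.2 of \cite{FJL12}.
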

\begin{proof}
If we equate the small-time basket call expansion in Theorem \ref{thm:GeneralCase} (normalized by the effective initial stock price, which is $2$) and the small-time expansion for a standard European call option with initial stock price $1$ and strike price $e^{x_1}$ using the usual Black Scholes call option formula with volatility parameter equal to $\sqrt{\sigma^2+at}$ and maturity $t$ (see Proposition 3.4 in \cite{FJL12}) we get
\bq
\frac{\bar{\psi}(k)}{\sqrt{2\pi}}\, t^{\frac{3}{2}}e^{-\frac{\bar{\Lambda}(k)^2}{2 t}}[1+o(1)]  &=&\frac{A_{\mathrm{BS}}(x_1,\sigma)}{\sqrt{2\pi}}\, e^{\frac{1}{2}\frac{ax_1^2}{\sigma^4}}\,t^{\frac{3}{2}}\, e^{-\frac{x_1^2}{2\sigma^2t}}[1+o(1)]
\eq
where $A_{\mathrm{BS}}(x_1,\sigma)=\frac{\sigma^3}{x_1^2}e^{\half x_1}$.  Taking the log of both sides and cancelling terms we see that
\bq
-\frac{\bar{\Lambda}(k)^2}{2 t} \,+\,\log \bar{\psi}(k)\,   &=& -\frac{x_1^2}{2\sigma^2t} \,+\, \log [A_{\mathrm{BS}}(x_1,\sigma)]\, \,+\,\frac{1}{2}\frac{ax_1^2}{\sigma^4}\,+\,o(1)
\eq
\nind and equating the leading order and correction terms we obtain \eqref{eq:sigmaa}.  This equating argument is made rigorous in section 7.2 of \cite{FJL12} and is a model-independent argument.
\end{proof}

\subsection{Numerical results}
\sk
\sk

\sk Before delving into the numerics, we first recall the asymptotic relation
\bq
\Upsilon(k,t) \,\,\,\,:=\,\,\,\, \int_0^t  \frac{1}{\sqrt{u   }}\,e^{-\frac{k^2}{2u}} du&=&\frac{2}{k^2}\,t^{\frac{3}{2}}  e^{-\frac{k^2}{2 t}}\,\,[1+O(\frac{t}{k^2})]\eq
for $k>0$ as $t \to 0$ from \eqref{eq:Awkward}, and we note (again) that the error term inside the bracket is
$O(\frac{t}{k^2})$.  We apply this relation in the proofs of Theorem \ref{thm:SABRrhozero} and \ref{thm:GeneralCase} for third and final integration (i.e. the outer integral of the original triple integral, where we perform the final integration over $t$ using the Tanaka formula) and $k$ is given by $k=\mathrm{argmin}_{-\infty 0 <\log K }d(x,y_K(x),a^*(x,y))$.  If we apply the saddlepoint approximation formula in Theorem \ref{thm:GeneralCase} for basket calls which are closer to the at-the-money value of $K=2$ (and hence not unrealistically exponentially small in price)  $\frac{t}{k^2}$ is not $\ll 1$, so the approximation does not work so well.  Thus, in practice we recommend using the exact (closed-form) expression for $\Upsilon(k,t)$ given in \eqref{eq:Awkward} in terms of the Erf function for the final outer integral rather than the asymptotic result in \eqref{eq:Awkward}, but for completeness we compute numerics for both approximations, and the approximation given in Theorems \ref{thm:SABRrhozero} and \ref{thm:GeneralCase} still work very well for basket calls which are further from away from $K=2$.  This is an issue with \textit{any} small-time saddlepoint estimate for out-of-the-money call options under a stochastic volatility model, and is not specific to basket call options or this article.  We use the NMinimize command in Mathematica to perform the minimization in computing $\Lambda(k)=\min_x \bar{H}_K(x,y_K(x))$.

\sk
In the first table below, we have tabulated the ratio of the approximate price of the basket call computed numerically as a triple integral in Mathematica: $P^{\mathrm{numint}}(K):=\int_{x=-\infty}^{\infty} \int_{y=-\infty}^{\infty} \int_{a=0}^{\infty} (e^{x}+e^y-K)^+\hat{p}^1(x_0,y_0,a_0;x,y,a,t)da dy dx$ \footnote{where $\hat{p}^1$ is the leading order approximation to the true transition density given in \eqref{eq:phat1}} to the basket call saddlepoint approximation in \eqref{eq:BasketCallApproxCorr} (which we call $P^{\mathrm{saddle}}(K)$, see the first column on the table); in the second column we compute the same ratio but we replace the saddlepoint approximation with the adjusted formula where we use the exact expression for $\Upsilon(k,t)$ for the final integration (we call this $P^{\mathrm{saddle},\Upsilon}(K)$).  The parameters here are $t=.003$ (which is of the order of $1$ day), and $\sigma_x=\sigma_y=\sigma_a=1/\sqrt{10}\approx 0.316$, $\rho_{xy} = 0.01; \rho_{ya} = -.05; \rho_{xa} = 0.02$.).  As expected we see that both saddlepoint approximations do not work as well as $K$ tends to the at-the-money value of $2$ (because for $K$ values close to 2 we are in the \textit{moderate}, not the large deviations regime), but work very well for larger out-of-the money $K$-values.  The main purpose of this first table is not to show how well the approximation works in practice (because for $t=0.003$ the basket call prices here are too low to be of practical use), but rather to initially verify check that the formula is correct before applying it to more realistic scenarios, see next paragraph).
  \sk

  In the second table, we consider a more realistic maturity of $t=0.02$ (with the same parameters as above but now $\rho_{xy} = 0.01$, $\rho_{ya} = .05$ and $\rho_{xa} = 0.2$) and for the smaller strikes the basket call prices now take sensible values i.e. not astronomically small, and we see that the $P^{\mathrm{saddle},\Upsilon}(K)$ approximation still works well.  In this table we also compute the implied volatilities $\hat{\sigma}$ associated with $P^{\mathrm{numint}}(K)$ and $P^{\mathrm{saddlepoint},\Upsilon}(K)$ and the leading order implied volatility computed from the first equation in Eq \eqref{eq:sigmaa} (these numbers are plotted in Figure 5).  In the final three graphs, we plot these same three implied volatility smiles for three different sets of parameters with common maturity $t=.01$.

  \sk

  \sk
  Note that we have not used Monte Carlo simulation anywhere because the usual Wilard\cite{Wil97} conditioning trick cannot be applied in this context because there is no closed-form expression for basket calls under the Black-Scholes model. The other alternative would be to use importance sampling by changing to a measure under which the large deviations event becomes likely, but this would involve very messy calculations of the geodesics for the hyerbolic metric on $\mathbb{H}^3$ with a full correlation structure.\footnote{We found a mistake in the Monte Carlo implementation of the previous version and graph it produced was overly flattering because it was plotted in terms of implied volatility.}.

\bs

 \begin{center}
  \begin{tabular}{| l | l | l | l |}
    \hline
     $K \,(t=0.003)$  & $\frac{P^{\mathrm{numint}}(K)}{ P^{\mathrm{saddle}}(K)}$ &  $\frac{P^{\mathrm{numint}}(K)}{ P^{\mathrm{saddle},\Upsilon}(K)}$  \\ \hline
                2.1   & 0.86086  &  1.010238 \\ \hline
                2.3  & 0.98603 &   1.008826 \\ \hline
                2.5  & 1.00044 &   1.009752 \\ \hline
                2.7  &1.00138 &    1.006666  \\ \hline
                2.9  & 0.99875 &    1.002281 \\ \hline
                3.1  &0.99957 &    1.002183 \\ \hline
                3.3  & 1.00928 &    1.011362

 \\ \hline
  \end{tabular}
\end{center}

\bs

 \begin{center}
  \begin{tabular}{| l | l | l | l |l |l |l |}     \hline
      $K \, (t=0.02)$ &  $P^{\mathrm{numint}}(K)$ &   $P^{\mathrm{saddle},\Upsilon}(K)$ &  $\frac{ P^{\mathrm{saddle},\Upsilon}(K) }{ P^{\mathrm{numint}}(K)}$ & $\hat{\sigma}^{\mathrm{numint},\Upsilon}(K)$ &$\hat{\sigma}^{\mathrm{saddle},\Upsilon}(K)$ & $\hat{\sigma}^{0}(K)$ \\ \hline
      2.05 & 0.0090506&	0.00914671	&	1.010619 & 0.23862 &	0.23745	 & 0.22545\\ \hline
2.1 & 0.0020265	&0.00204746 &		1.010308 & 0.23308 &	0.23248	 & 0.22624 \\ \hline
2.15 & 0.000313645	&0.000316796	&	1.010046 & 0.23122 &	0.23085	 & 0.22709 \\ \hline
2.2 & 3.37991E-05	&3.41335E-05	&	1.009894 &0.23076 &	0.23052	 & 0.22799 \\ \hline
2.25 & 2.58363E-06	& 2.60899E-06 &	1.009816 & 0.23094 &	0.23077	 & 0.22894 \\ \hline
2.3 & 1.4354E-07	&1.44944E-07	&	1.009781 & 0.23145 &	0.23132	 & 0.22992 \\ \hline
2.35 & 5.95334E-09	&	6.01137E-09 &	1.009747 & 0.23216 &	0.23206	 & 0.23094 \\ \hline
2.4 & 1.8942E-10 	&1.91261E-10&		1.009719 & 0.23300 &	0.23291	 & 0.23198 \\ \hline
\end{tabular}
\end{center}

\begin{figure}
\centering
\includegraphics[width=155pt,height=150pt]{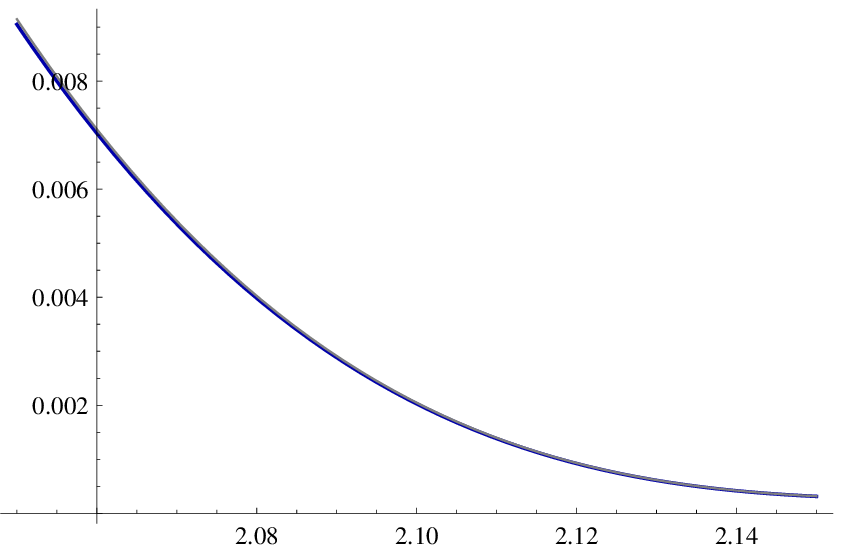}
\includegraphics[width=155pt,height=150pt]{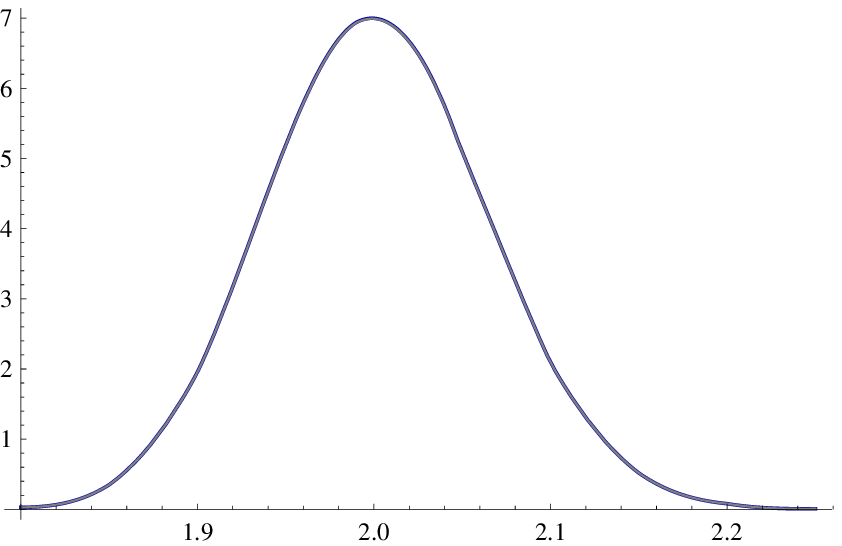}
\includegraphics[width=180pt,height=160pt]{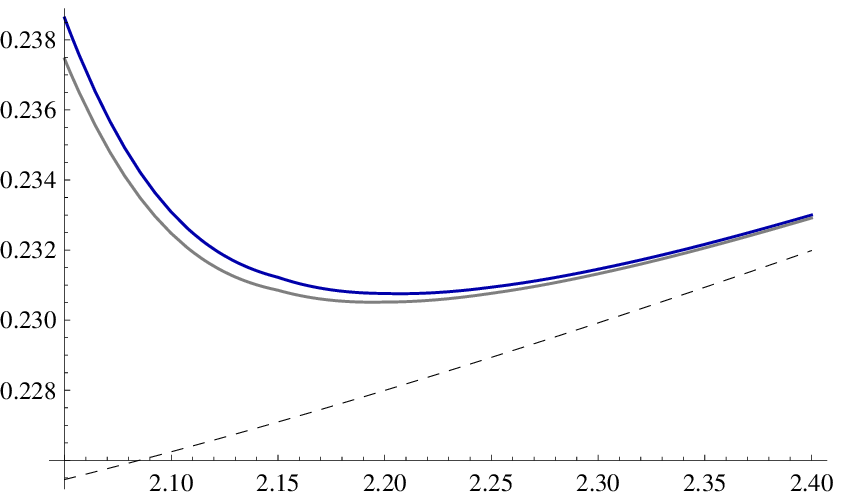}
\nind \caption{In the graph on the left, we have plotted $P^{\mathrm{saddle}}(K)$ (grey) verses $P^{\mathrm{numint}}(K)$ (blue) for $\sigma_x=\sigma_y=\sigma_a=1/\sqrt{10}\approx 0.316$, $\rho_{xy} = 0.01$, $\rho_{ya} = .05$ $\rho_{xa}= 0.2$ and $t=0.02$ (this is the data in the second table), and the parameters values given above.  In the middle plot we plot the saddlepoint approximation for the density of $S^1_t+S^2_t$ (grey) verses the density of $S^1_t+S^2_t$ via numerical integration (blue), the former is just obtained by making a trivial adjustment to the prefactors in front of the exponential in \eqref{eq:Wazaa}, similar to what we did for Figure 4 (the two density approximations are so similar here that it is difficult to make out the blue curve underneath the grey one).  In the right plot we plot the corresponding implied volatility with the same colour scheme and we also plot the leading order implied volatility $\hat{\sigma}_0(K)$ using the formula in the first equation in Eq \eqref{eq:sigmaa} (black dashed).}
\end{figure}
\begin{figure}
\centering
\includegraphics[width=180pt,height=160pt]{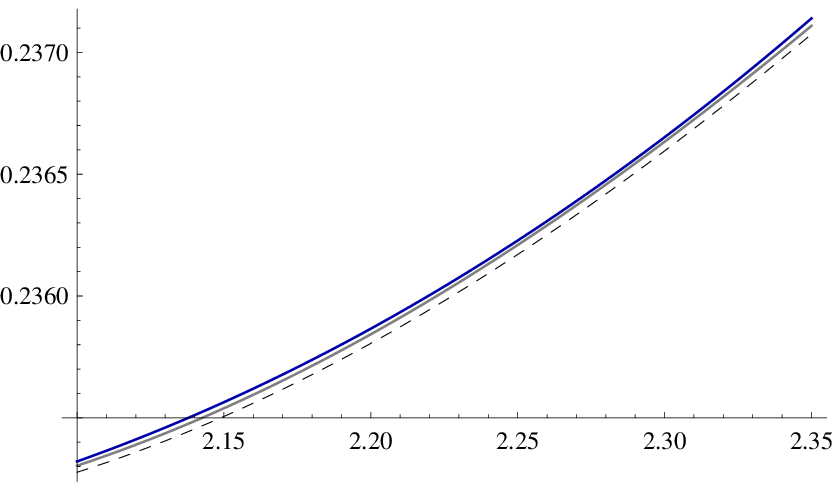}
\includegraphics[width=180pt,height=160pt]{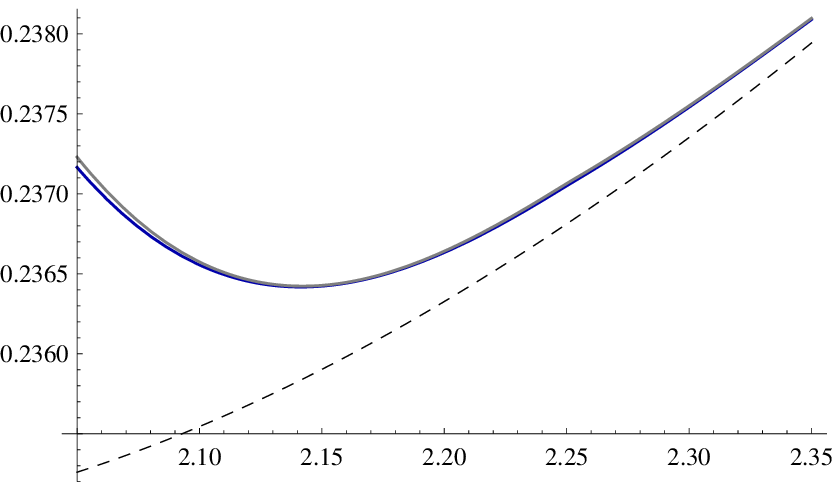}
\includegraphics[width=180pt,height=160pt]{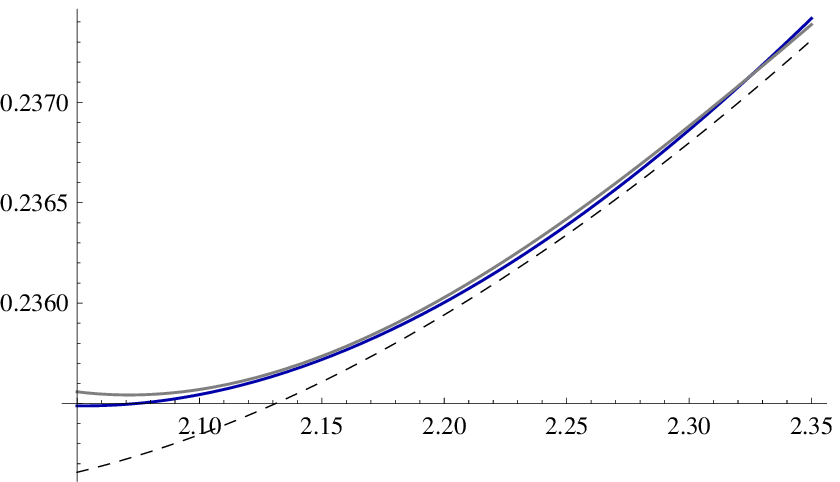}
\nind \caption{Here we have plotted the implied volatility for $P^{\mathrm{saddle}}(K)$ (grey) verses the implied volatility for $P^{\mathrm{numint}}(K)$ (blue) verses the leading order implied volatility $\hat{\sigma}_0(K)$ (black dashed) for
 $\sigma_x=1.1/\sqrt{10}$, $\sigma_y=\sigma_a=1/\sqrt{10}\approx 0.316$ and $t=.01$ in all plots with $\rho_{xy} =\rho_{ya}=\rho_{xa} =0$ (left graph), and $\rho_{xy} =0$,  $\rho_{ya}=.03$, $\rho_{xa} =0.02$ (right graph) and $\rho_{xy} =0$,  $\rho_{ya}=-.02$, $\rho_{xa} =0.03$ (lower graph)}.
\end{figure}

\subsection{Quanto and spread options}

\sk

It should be possible to adapt Theorem \ref{thm:GeneralCase} to compute small-time asymptotics for a quanto option which pays $(S^{1}_t/S^{2}_t-K)=(e^{X_t-Y_t}-K)^+$, e.g. an option on the EUR/GBP exchange rate but with payout in dollars or a spread option which pays $(S^{1}_t-S^{2}_t-K)$;  we defer the details for future research, but for the former case it is clear that $y_K(x)$ should be changed to $\log K -x$.

\bs

\appendix
\renewcommand{\theequation}{A-\arabic{equation}}
\setcounter{equation}{0}  

\bs
\section{Proof of Lemma \ref{lem:BFLError}}
\label{section:Proof}

\nind The following result will be useful:
\begin{lem}
\label{convexlem}
If $k(x)$ is $C^2$ and strictly convex over $[a,b]$, then $m(x):=k(\frac{a+b}{2}-x)+k(\frac{a+b}{2}+x)$ is strictly increasing over $[0,\frac{b-a}{2}]$.
\end{lem}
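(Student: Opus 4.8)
The plan is to reduce the statement to the elementary fact that a strictly convex function has a strictly increasing derivative. Write $c=\frac{a+b}{2}$, so that $m(x)=k(c-x)+k(c+x)$; since $k$ is $C^2$ (in fact $C^1$ would suffice), $m$ is differentiable on $[0,\frac{b-a}{2}]$ and
\[
m'(x)=k'(c+x)-k'(c-x).
\]
For any $x\in(0,\frac{b-a}{2}]$ one has $a=c-\frac{b-a}{2}\le c-x<c+x\le c+\frac{b-a}{2}=b$, so both $c-x$ and $c+x$ lie in $[a,b]$ and satisfy $c-x<c+x$.

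Next I would invoke strict convexity of $k$ on $[a,b]$: this forces $k'$ to be strictly increasing on $[a,b]$, since if $k'$ took equal values at two points it would be constant (being non-decreasing) in between, making $k$ affine there and contradicting strict convexity. Hence $k'(c+x)>k'(c-x)$, i.e.\ $m'(x)>0$ for every $x\in(0,\frac{b-a}{2}]$, whereas $m'(0)=0$. Strict monotonicity of $m$ on the closed interval $[0,\frac{b-a}{2}]$ then follows from the mean value theorem: for $0\le x_1<x_2\le\frac{b-a}{2}$ there is $\xi\in(x_1,x_2)\subset(0,\frac{b-a}{2}]$ with $m(x_2)-m(x_1)=m'(\xi)(x_2-x_1)>0$.

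There is essentially no obstacle in this argument; the only point worth care is that it is \emph{strict} convexity (rather than, say, $k''>0$ everywhere) that is both what is assumed and exactly what is needed to make $k'$ strictly increasing, and hence $m$ strictly — not merely weakly — increasing. This lemma will then be used in Appendix A, applied to functions built from $u\mapsto(\log u)^2$, in the analysis of the minimizers of $\bar{h}_K$.
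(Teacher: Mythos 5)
Your proof is correct, and it takes a slightly different (and in fact more robust) route than the paper. The paper differentiates twice: it writes $m''(x)=k''(\frac{a+b}{2}-x)+k''(\frac{a+b}{2}+x)>0$, concludes $m'$ is strictly increasing, and finishes with $m'(0)=0$. Strictly speaking, that step needs $k''>0$ pointwise, which is \emph{not} implied by strict convexity of a $C^2$ function (consider $k(x)=x^4$ near $0$); in the paper's application the relevant function $u$ does satisfy $u''>0$ on the interval used, so no harm is done there. You instead work with the first derivative, $m'(x)=k'(c+x)-k'(c-x)$, and derive positivity for $x>0$ from the strict monotonicity of $k'$, which is exactly equivalent to strict convexity for a differentiable function — your justification of that equivalence (a non-decreasing $k'$ taking equal values at two points forces $k$ to be affine in between) is correct. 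Your argument therefore proves the lemma under the hypothesis as literally stated, needs only $C^1$ rather than $C^2$, and closes the small gap in the paper's two-derivative shortcut; the paper's version is marginally shorter when $k''>0$ is available.
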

\begin{proof}
$m''(x)=k''(\frac{a+b}{2}-x)+k''(\frac{a+b}{2}+x)>0$, so $m'(x)$ is strictly increasing over $[0,\frac{b-a}{2}]$. But $m'(0)=-k'(\frac{a+b}{2})+k'(\frac{a+b}{2})=0$. The result follows.
\end{proof}

\bs

Recall that $g(z)=(\log z)^2$.  To simplify notation, we introduce
\bq u(z) \,\,\,\,:=\,\,\,\, \frac{1}{2}g''(z) \,\,\,\,=\,\,\,\,\frac{1-\log z}{z^2}\quad \quad , \quad \quad  z>0 \nn \, .\eq
Then we have $u'(z)=\frac{2\log z-3}{z^3}$, $u''(z)=\frac{11-6\log z}{z^4}$, $u(e)=0$ and $u(0+)=\infty$. Moreover, $u(z)<0$ if and only if $z>e$.

\sk

\begin{itemize}
\item  If $K\in(0,2e)$, then $u(\half K)>u(e)=0$. Since $u''(z)>0$ for all $z\in(0,e^{\frac{11}{6}})\supset(0,2e)\supset(0,K)$, we know that $u(z)$ is strictly convex over $(0,K)$. Recall that $\bar{h}_K(z)=g(z)\,+\,g(K-z)$.  Then by convexity of $u$, we know that
\bq\frac{1}{2}\bar{h}_K''(z) ~ u(z)+u(K-z)\gee 2 u(\half K) \gt 0 \,. \eq
Since $\bar{h}_K(z)$ is strictly convex and $\bar{h}_K(z)=\bar{h}_K(K-z)$, there is a unique minimizer of $\bar{h}_K(z)$ at $z^*=\half K$.

\sk

\item The case $K=2e$ is easily verified.

\sk

\item If $K\in(2e,2e^{\frac{3}{2}}]$, using that $u'(K-z)=-\frac{d}{dz}u(K-z)=\frac{-3+2\log(K-z)}{(K-z)^3}$, we see that $u(K-z)$ is strictly decreasing for $z\in(0,K-e^{\frac{3}{2}})$. Similarly, it can be easily seen that $u(z)$ is also strictly decreasing for $z\in(0,K-e^{\frac{3}{2}})\subset(0,e^{\frac{3}{2}}]$.  Thus $\frac{1}{2}\bar{h}_K''(z)=u(z)+u(K-z)$ is strictly decreasing for $z\in(0,K-e^{\frac{3}{2}})$. Furthermore, $u(z)$ is strictly convex for all $z\in[K-e^{\frac{3}{2}},e^{\frac{3}{2}}]\subset (0,e^{\frac{11}{6}})$. By Lemma \ref{convexlem}, choosing choose $a$ and $b$ such that
$(a+b)/2=\half K $ and $(b-a)/2=e^{\frac{3}{2}}-K/2$ so $[a,b]=[K-e^{\frac{3}{2}},e^{\frac{3}{2}}]$ we know that
\bq\frac{1}{2}\bar{h}_K''(\half K-y) ~ u(\half K-y)+u(\half K+y), \eq
is strictly increasing for $y\in[0,e^{\frac{3}{2}}-\half K]$. In other words, $\bar{h}_K''(z)$ is strictly decreasing for $z\in[K-e^{\frac{3}{2}},\half K]$. Overall, we have proved that $\bar{h}_K''(z)$ is strictly decreasing for $z\in(0,\half K]$, hence, there is at most one root to $\bar{h}_K''(z)=0$ over $(0,\half K)$. By $\bar{h}_K''(0+)=\infty$ and $\bar{h}_K''(\half K)<0$ we have the existence and uniqueness of the root to $\bar{h}_K''(z)=0$ over $(0,\half K)$. Denoting this root by $y_1$, then by symmetry, $\bar{h}_K(z)$ is strictly convex on $(0,y_1)$, strictly concave on $(y_1,K-y_1)$, and strictly convex on $(K-y_1,K)$. This implies that there is exactly one root to $\bar{h}_K'(z)=0$ over $(0,y_1)$. Otherwise, we would either have at least one root to $\bar{h}_K''(z)=0$ over $(0,y_1)$, or $0=\bar{h}_K'(\half K)<\bar{h}_K'(y_1)\le 0$, which is a contradiction.
Denoting this root by $z^*$, then we have two minima of $\bar{h}_K(z)$ at $z^*$ and $K-z^*$.

\sk

\item  If $K\in(2e^{\frac{3}{2}},\infty)$, then using the facts that $u(z)$ is strictly decreasing for $z\in(0,e^{\frac{3}{2}}]$, and that $u(K-z)$ is strictly decreasing for $z\in(0,K-e^{\frac{3}{2}}]\supset(0,e^{\frac{3}{2}}]$, we know that $\frac{1}{2}\bar{h}_K''(z)=u(z)+u(K-z)$ is strictly decreasing for all $z\in(0,e^{\frac{3}{2}}]\subsetneq(0,\half K)$. Moreover, since $u(e^{\frac{3}{2}})<u(e)=0$ and $u(K-e^{\frac{3}{2}})<u(e)=0$ (since $K>2e^{\frac{3}{2}}>e+e^{\frac{3}{2}}$) , we know that $\frac{1}{2}\bar{h}_K''(e^{\frac{3}{2}})<0$. But $\bar{h}_K''(0+)=\infty$, we know that there is a unique root to $\bar{h}_K''(z)=0$ over $(0,e^{\frac{3}{2}})$. Finally, for all $z\in[e^{\frac{3}{2}},\half K]$, we have $z>e$ and $K-z>e$, so  $u(z)<0$, $u(K-z)<0$, and $\frac{1}{2}\bar{h}_K''(z)<0$ for all $z\in[e^{\frac{3}{2}},\half K]$. By the same argument as the last bullet point, we can now establish the existence of two minima at $z^*\in(0,e^{\frac{3}{2}})$ and $K-z^*$.
\end{itemize}

\renewcommand{\theequation}{B-\arabic{equation}}
\setcounter{equation}{0}  
\section{Computing $\ex(a_t^{\frac{3}{2}}T_t^{-1})$}\label{proofexp}

The joint density of $(a_t,T_t)$ can be found in \cite{MY05}. It particular, $\pr(a_t\in da, dT_t\in dI)=\frac{e^{\frac{\pi^2}{2t}-\frac{t}{8}}}{\sqrt{2\pi^3t}}\frac{1}{I^2\sqrt{a}}\exp(-\frac{1+a^2}{2I})\psi_{\frac{a}{I}}(t)dadI$ where $\psi_r(t)=\int_0^\infty dz\, e^{-\frac{z^2}{2t}-r\cosh z}\sinh z\sin\frac{\pi z}{t}$ for all $r,t>0$.  From (14) of \cite{FZ14} it is known that $|\psi_r(t)|\le\frac{1}{r}$. Thus,
\bq
\ex(a_t^{\frac{3}{2}}T_t^{-1})~ \int_0^\infty\int_0^\infty a^{\frac{3}{2}}I^{-1}\pr(a_t\in da, dT_t\in dI) &\le&\frac{e^{\frac{\pi^2}{2t}-\frac{t}{8}}}{\sqrt{2\pi^3t}}\int_0^\infty\int_0^\infty\frac{a^{\frac{3}{2}}}{I}\frac{1}{I^2\sqrt{a}}\exp(-\frac{1+a^2}{2I})\frac{I}{a}dadI\nn\\
&=&\frac{e^{\frac{\pi^2}{2t}-\frac{t}{8}}}{\sqrt{2\pi^3t}}\int_0^\infty da\int_0^\infty \exp(-\frac{1+a^2}{2I})\frac{dI}{I^2}\nn\\
&=&\frac{e^{\frac{\pi^2}{2t}-\frac{t}{8}}}{\sqrt{2\pi^3t}}\int_0^\infty \frac{2}{1+a^2}da\nn\\
&=&\frac{e^{\frac{\pi^2}{2t}-\frac{t}{8}}}{\sqrt{2\pi t}}. \nn
\eq

\end{document}